\newcommand{\Scribtexttt}[1]{{\texttt{#1}}}
\newcommand{\SColorize}[2]{\color{#1}{#2}}
\newcommand{\inColor}[2]{{\Scribtexttt{\SColorize{#1}{#2}}}}
\definecolor{PaleBlue}{rgb}{0.90,0.90,1.0}
\newcommand{\rackett}[1]{\inColor{black}{#1}}
\newcommand{\iftr}[1]{#1}
\newcommand{\ifntr}[1]{}
\newcommand{\eg}{\textit{e.g.}}
\newcommand{\zcfa}{$0$CFA}
\newcommand{\kcfa}{$k$CFA}
\newcommand{\nat}{{\mathbb N}}
\newcommand{\mexpr}{e}
\newcommand{\mexpri}[1]{e_{#1}}
\newcommand{\mvar}{x}
\newcommand{\mval}{v}
\newcommand{\mlam}{\ell}
\newcommand{\maval}{\hat{v}}
\newcommand{\maddr}{a}
\newcommand{\maddralt}{b}
\newcommand{\menv}{\rho}
\newcommand{\mstore}{\sigma}
\newcommand{\mastore}{{\hat\sigma}}
\newcommand{\mlkont}{\iota}
\newcommand{\mkont}{\kappa}
\newcommand{\mmkont}{C}
\newcommand{\mamkont}{{\hat{C}}}
\newcommand{\mstate}{\varsigma}
\newcommand{\mastate}{\hat\varsigma}
\newcommand{\mtime}{t}
\newcommand{\mtimealt}{u}
\newcommand{\mktab}{\Xi}
\newcommand{\mmktab}{\chi}
\newcommand{\mmemo}{M}
\newcommand{\mctx}{\tau}
\newcommand{\mmctx}{\gamma}
\newcommand{\mactx}{\hat{\mctx}}
\newcommand{\msctx}{\dot{\mctx}}
\newcommand{\mkframe}{\phi}
\newcommand{\makframe}{\hat\phi}
\newcommand{\mtrace}{\pi}
\newcommand{\CEK}{\mathit{CEK}}
\newcommand{\CESK}{\mathit{CESK}}
\newcommand{\CESKt}{\mathit{CESK_t}}
\newcommand{\CESKstart}{\mathit{CESK^*_t}}
\newcommand{\CESKKstart}{\mathit{CESK^*_t\mktab}}
\newcommand{\CESIKKstart}{\mathit{CESIK^*_t\mktab}}
\newcommand{\CESIKKstar}{\mathit{CESIK^*\mktab}}
\newcommand{\SR}{\mathit{SR}}
\newcommand{\SRSChKKt}{\mathit{SRS\mmktab_t}}
\newcommand{\Var}{\mathit{Var}}
\newcommand{\Addr}{\mathit{Addr}}
\newcommand{\Time}{\mathit{Time}}
\newcommand{\Expr}{\mathit{Expr}}
\newcommand{\Store}{\mathit{Store}}
\newcommand{\Storeable}{\mathit{Storeable}}
\newcommand{\KStore}{\mathit{KStore}}
\newcommand{\MKStore}{\mathit{CStore}}
\newcommand{\LKont}{\mathit{LKont}}
\newcommand{\Kont}{\mathit{Kont}}
\newcommand{\VKont}{\widetilde{\mathit{Kont}}}
\newcommand{\SKont}{\widehat{\mathit{Kont}}}
\newcommand{\MKont}{\mathit{MKont}}
\newcommand{\Env}{\mathit{Env}}
\newcommand{\State}{\mathit{State}}
\newcommand{\Value}{\mathit{Value}}
\newcommand{\Lam}{\mathit{Lam}}
\newcommand{\mperm}{P}
\newcommand{\mpermmap}{m}
\newcommand{\Grant}{\mathbf{Grant}}
\newcommand{\Deny}{\mathbf{Deny}}
\newcommand{\mgd}{\mathit{gd}}
\newcommand{\Permissions}{\mathit{Permissions}}
\newcommand{\GD}{\mathit{GD}}
\newcommand{\PermissionMap}{\mathit{PermissionMap}}
\newcommand{\makont}{{\hat{\mkont}}}
\newcommand{\mvkont}{\tilde{\mkont}}
\newcommand{\Relevant}{\mathit{Relevant}}
\newcommand{\ExactContext}{\mathit{ExactContext}}
\newcommand{\Context}{\mathit{Context}}
\newcommand{\MContext}{\mathit{MContext}}
\newcommand{\Frame}{\mathit{Frame}}
\newcommand{\MKTab}{\mathit{KClosure}}
\newcommand{\Memo}{\mathit{Memo}}
\newcommand{\pop}{\mathit{pop}}
\newcommand{\popaux}{\mathit{pop}^*}
\newcommand{\fv}{\mathit{fv}}
\newcommand{\alloc}{\mathit{alloc}}
\newcommand{\tick}{\mathit{tick}}
\newcommand{\fresh}{\mathit{fresh}}
\newcommand{\passp}{\mathit{pass?}}
\newcommand{\dom}{\text{dom}}
\newcommand{\startstate}{\mathit{base}}
\newcommand{\tails}[2]{\mathit{tails}_{#1}(#2)}
\newcommand{\reify}{\mathit{reify}}
\newcommand{\reachrestrict}{\mathit{unfold}}
\newcommand{\stepextend}{\mathit{unfold}_1}
\newcommand{\approximate}{{\mathbb A}}
\newcommand{\touches}{{\mathcal T}}
\newcommand{\reaches}{{\mathcal R}}
\newcommand{\live}{\mathit{live}}
\newcommand{\inv}{\mathit{inv}}
\newcommand{\invmktab}{\mathit{inv}_\mktab}
\newcommand{\unroll}[2]{\mathit{unroll}_{#1}(#2)}
\newcommand{\unrollC}[2]{\mathit{unrollC}_{#1}(#2)}
\newcommand{\hastail}{\mathit{ht}}
\newcommand{\replacetail}{\mathit{rt}}
\newcommand{\kontlive}{K{\mathcal L}{\mathcal L}}
\newcommand{\kontliveaux}{K{\mathcal L}{\mathcal L}^*}
\newcommand{\terminal}{\mathit{terminal}}
\newcommand{\terminalaux}{\mathit{terminal}^*}
\newcommand{\post}{\mathit{post}}
\newcommand{\svar}[2][{}]{#2^{#1}}
\newcommand{\sapp}[3][{}]{(#2\ #3)#1}
\newcommand{\slam}[3][{}]{\lambda#1 #2. #3}
\newcommand{\sreset}[1]{\texttt{reset}\ #1}
\newcommand{\sshift}[2]{\texttt{shift}\ #1. #2}
\newcommand{\sframe}[2]{\texttt{frame}\ #1\ #2}
\newcommand{\sgrant}[2]{\texttt{grant}\ #1\ #2}
\newcommand{\stest}[3]{\texttt{test}\ #1\ #2\ #3}
\newcommand{\vcomp}[1]{\mathbf{comp}(#1)}
\newcommand{\appl}[1]{\mathbf{appL}(#1)}
\newcommand{\appr}[1]{\mathbf{appR}(#1)}
\newcommand{\ev}[1]{\mathbf{ev}(#1)}
\newcommand{\co}[1]{\mathbf{co}(#1)}
\newcommand{\kmt}{\mathbf{mt}}
\newcommand{\mkapp}[2]{#1\circ #2}
\newcommand{\sa}[1]{\widehat{\mathit{#1}}}
\newcommand{\somestate}{\lozenge}
\newcommand{\nextstate}{\blacklozenge}
\newcommand{\someotherstate}{\square}
\newcommand{\nextotherstate}{\blacksquare}
\newcommand{\lfp}{\mathbf{lfp}}
\newcommand{\tpl}[1]{\langle #1 \rangle}
\newcommand{\set}[1]{\{ #1 \}}
\newcommand{\setbuild}[2]{\{ #1\ :\ #2\}}
\newcommand{\kcons}[2]{#1{\tt :} #2}
\newcommand{\kconsm}[3]{#1{\tt :}^{#2} #3}
\newcommand{\alt}{\mathrel{\mid}}
\newcommand{\extm}[3]{#1[#2 \mapsto #3]}
\newcommand{\joinm}[3]{#1\sqcup[#2 \mapsto #3]}
\DeclareMathOperator*{\finto}{\rightharpoonup}
\DeclareMathOperator*{\deceq}{\overset{?}{=}}
\DeclareMathOperator*{\decin}{\overset{?}{\in}}
\newcommand{\OK}{{\mathcal O}{\mathcal K}}
\newcommand{\append}[2]{#1 {\tt ++} #2}
\newcommand{\stepto}{\longmapsto}
\newcommand{\kindastepto}{\dashedrightarrow}
\newcommand{\SCodePreSkip}{\vskip\abovedisplayskip}
\newcommand{\SCodePostSkip}{\vskip\belowdisplayskip}
\newenvironment{SCodeFlow}{\SCodePreSkip\begin{list}{}{\topsep=0pt\partopsep=0pt%
\listparindent=0pt\itemindent=0pt\labelwidth=0pt\leftmargin=2ex\rightmargin=2ex%
\itemsep=0pt\parsep=0pt}\item}{\end{list}\SCodePostSkip}
\newenvironment{SingleColumn}{\begin{list}{}{\topsep=0pt\partopsep=0pt%
\listparindent=0pt\itemindent=0pt\labelwidth=0pt\leftmargin=0pt\rightmargin=0pt%
\itemsep=0pt\parsep=0pt}\item}{\end{list}}
\newenvironment{RktBlk}{}{}
\definecolor{IdentifierColor}{rgb}{0.15,0.15,0.50}
\definecolor{ParenColor}{rgb}{0.52,0.24,0.14}
\newcommand{\RktSym}[1]{\inColor{IdentifierColor}{#1}}
\newcommand{\RktPn}[1]{\inColor{ParenColor}{#1}}
\definecolor{ValueColor}{rgb}{0.13,0.55,0.13}
\newcommand{\RktVal}[1]{\inColor{ValueColor}{#1}}
\newcommand{\ifwcm}[1]{}
\newtheorem{theorem}{Theorem}  
\newaliascnt{lemma}{theorem}  
\newtheorem{lemma}[lemma]{Lemma}  
\title{Abstracting Abstract Control\iftr{\\ \large{(Extended)}}}
\begin{document}
\maketitle


\begin{abstract}
The strength of a dynamic language is also its weakness: run-time
flexibility comes at the cost of compile-time predictability.
Many of the hallmarks of dynamic languages such as closures,
continuations, various forms of reflection, and a lack of static types
make many programmers rejoice, while compiler writers, tool
developers, and verification engineers lament.
The dynamism of these features simply confounds statically reasoning
about programs that use them.
Consequently, static analyses for dynamic languages are few, far
between, and seldom sound.

The ``abstracting abstract machines'' (AAM) approach to constructing
static analyses has recently been proposed as a method to ameliorate
the difficulty of designing analyses for such language features.
The approach, so called because it derives a function for the sound
and computable approximation of program behavior starting from the
abstract machine semantics of a language, provides a viable approach
to dynamic language analysis since all that is required is a machine
description of the interpreter.

The AAM recipe as originally described produces finite state
abstractions: the behavior of a program is approximated as a finite
state machine.
Such a model is inherently imprecise when it comes to reasoning about
the control stack of the interpreter: a finite state machine cannot
faithfully represent a stack.
Recent advances have shown that higher-order programs can be
approximated with pushdown systems.
However, such models, founded in automata theory, either breakdown or
require significant engineering in the face of dynamic language
features that inspect or modify the control stack.

In this paper, we tackle the problem of bringing pushdown flow
analysis to the domain of dynamic language features.  We revise the
abstracting abstract machines technique to target the stronger
computational model of pushdown systems.
In place of automata theory, we use only abstract machines and
memoization.
As case studies, we show the technique applies to a language with
closures, garbage collection, stack-inspection, and first-class
composable continuations.

\end{abstract}

\section{Introduction}
Good static analyses use a combination of abstraction techniques, economical data structures, and a lot of engineering~\citep{dvanhorn:CousotEtAl-TASE07,ianjohnson:DBLP:journals/ipl/YiCKK07}.
The cited exemplary works stand out from a vast amount of work attacking the problem of statically analyzing languages like C.
Dynamic languages do not yet have such gems.
The problem space is different, bigger, and full of new challenges.
The traditional technique of pushing abstract values around a graph to get an analysis will not work.
The first problem we must solve is, ``what graph?'' as control-flow is now part of the problem domain.
Second, features like stack inspection and first-class continuations are not easily shoe-horned into a CFG representation of a program's behavior.
%

%
Luckily, there is an alternative to the CFG approach to analysis construction that is based instead on abstract machines, which are one step away from interpreters; they are interderivable in several instances~\citep{dvanhorn:Danvy:DSc}.
This alternative, called abstracting abstract machines
(AAM)~\citep{dvanhorn:VanHorn2012Systematic}, is a simple
idea that is generally applicable to even the most dynamic of
languages, \eg{},
JavaScript~\citep{ianjohnson:DBLP:journals/corr/KashyapDKWGSWH14}.
A downside is that all effective instantiations of AAM are finite state approximations.
Finite state techniques cannot precisely predict where a method or function call will return.
Dynamic languages have more sources for imprecision than non-dynamic languages (\eg{}, reflection, computed fields, runtime linking, {\tt eval}) that all need proper treatment in the abstract.
If we can't have precision in the presence of statically unknowable behavior, we should at least be able to \emph{contain} it in the states it actually affects.
Imprecise control flow due to finite state abstractions is an unacceptable containment mechanism.
It opens the flood gate to imprecision flowing everywhere through analyses' predictions.
It is also a solvable problem.
We extend the AAM technique to computably handle infinite state spaces by adapting pushdown abstraction methods to abstract machines.
The unbounded stack of pushdown systems is the mechanism to precisely match calls and returns.
We demonstrate the essence of our pushdown analysis construction by
first applying the AAM technique to a call-by-value functional
language (\S\ref{sec:aam}) and then revising the derivation to
incorporate an exact representation of the control stack
(\S\ref{sec:pushdown}).  We then show how the approach scales to
stack-reflecting language features such as garbage collection and
stack inspection (\S\ref{sec:inspection}), and stack-reifying features
in the form of first-class delimited control operators (\S\ref{sec:delim}).
These case studies show that the approach is robust in the presence of
features that need to inspect or alter the run-time stack, which
previously have required significant technical
innovations~\cite{dvanhorn:Vardoulakis2011Pushdown,ianjohnson:DBLP:journals/jfp/JohnsonSEMH14}.

Our approach appeals to operational intuitions rather than automata theory to justify the approach.
The intention is that the only prerequisite to designing a pushdown
analysis for a dynamic language is some experience with implementing
interpreters. 
%
%

\section{Computation through the lens of static analysis}\label{sec:analysis} 
Static analysis is the process of soundly predicting properties of
programs.
It necessarily involves a tradeoff between the precision of those
predictions and the computational complexity of producing them.
At one end of the spectrum, an analysis may predict nothing, using no
resources.  At the other end, an analysis may predict everything, at
the cost of computability.

Abstract interpretation~\cite{dvanhorn:Cousot:1977:AI} is a form of
static analysis that involves the \emph{approximate} running of a
program by interpreting a program over an abstraction of the program's
values, e.g. by using intervals in place of
integers~\cite{Cousot-TASE07tutorial}, or types instead of
values~\cite{dvanhorn:esop:kmf07}.
By considering the sound abstract interpretation of a program, it is
possible to predict the behavior of concretely running the program. 
For example, if abstract running a program never causes a
buffer-overflow, run-time type error, or null-pointer dereference, we
can conclude actually running the program can never cause any of these
errors either.  If a fragment of code is not executed during the
abstract running, it can safely be deemed dead-code and removed.  More
fine-grained properties can be predicted too; to enable inlining, the
abstract running of a program can identify all of the functions that
are called exactly once and the corresponding call-site.  Temporal
properties can be discovered as well: perhaps we want to determine if
one function is always called before another, or if reads from a file
occur within the opening and closing of it.

In general, we can model the abstract running of a program by
considering each program state as a node in a graph, and track
evolution steps as edges, where each node and path through the graph
is an \emph{approximation} of concrete program behavior.
The art and science of static analysis design is the way we represent this graph of states; how little or how much detail we choose to represent in each state determines the precision and, often, the \emph{cost} of such an analysis.
First-order data-structures, numbers, arrays all have an abundance of
literature for precise and effective approximations, so this paper
focuses on higher-order data: closures and continuations, and their
interaction with state evolution.

A major issue with designing a higher-order abstract interpreter is
approximating closures and continuations in such a way that the
interpreter always terminates while still producing sound and precise
approximations.  Traditionally, both have been approximated by finite
sets, but in the case of continuations, this means the control stack
of the abstract interpreter is modeled as a finite graph and
therefore cannot be precise with regards to function calls and
returns.

\paragraph{Why pushdown return flow matters: an example}
Higher-order programs often create proxies, or monitors, to ensure an object or function interacts with another object or function in a sanitized way.
One example of this is behavioral contracts~\citep{dvanhorn:Findler2002Contracts}.
Simplified, here is how one might write an ad-hoc contract monitor for
a given function and predicates for its inputs and outputs:
 \begin{center}
\ifpdf
  \includegraphics[scale=0.45]{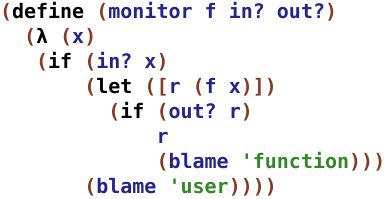}
\else
  \includegraphics[scale=0.45]{monitor.eps}
\fi
 \end{center}

It is well known that wrapping functions like this thwarts the
precision of regular \zcfa{} and higher \kcfa{} as more wrappings are
introduced.
In the case of this innocent program
 \begin{center}
\ifpdf
  \includegraphics[scale=0.45]{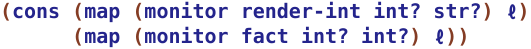}
\else
  \includegraphics[scale=0.45]{pair.eps}
\fi
 \end{center}

according to \zcfa{} the call to the wrapped \texttt{factorial}
function within the second \texttt{map} may return to within the
first \texttt{map}.  Hence \zcfa{} is not sufficiently precise to 
prove \texttt{factorial} cannot be blamed.
Using more a context-sensitive analysis such as 1CFA, 2CFA, etc.,
would solve the problem for this example, but would fail for nested
proxies.
In general, for any $k$, \kcfa{} will confuse the return flow of some
programs as in this example.
Yet, a pushdown abstraction that properly matches calls and returns
has no trouble with this example, regardless of proxy-nesting depth.

%

\paragraph{A systematic approach to pushdown analysis}

At this point, several pushdown analyses for higher-order languages
have been developed~\cite{dvanhorn:Vardoulakis2011CFA2,
dvanhorn:Earl2010Pushdown}, and the basic idea is simple: instead of
approximating a program with a finite state machine, use a pushdown
automata.  The control stack of the automata models the control stack
of the concrete interpreter, while stack frames, which contain
closures, are subject to the same abstraction as values in the
program.

This approach works well for simple languages which obey the stack
discipline of a PDA.  But most languages provide features that
transgress that discipline, such as garbage collection, first-class
control operators, stack inspection, and so on.  Some of these
features have been successfully combined with pushdown analysis, but
required technical innovation and
effort~\cite{dvanhorn:Vardoulakis2011Pushdown,
ianjohnson:DBLP:journals/jfp/JohnsonSEMH14,
dvanhorn:Earl2012Introspective}.  To avoid further one-off efforts, we
develop a general technique for creating pushdown analyses for
languages with control operators and reflective mechanisms.

\section{The AAM methodology}
\label{sec:aam}
Abstract machines are a versatile, clear and concise way to describe the semantics of programming languages.
They hit a sweet spot in terms of level of discourse, and viable implementation strategy.
First year graduate students learn programming language semantics with abstract machines~\cite{dvanhorn:Felleisen2009Semantics}.
%
They also turn out to be fairly simple to repurpose into static analyses for the languages they implement, via the Abstracting Abstract Machines methodology~\citep{dvanhorn:VanHorn2010Abstracting}.
The basic idea is that abstract machines implement a language's \emph{concrete} semantics, so we transform them slightly so that they also implement a language's \emph{abstract} semantics (thus ``abstracting'' abstract machines).

AAM is founded on three ideas:
\begin{enumerate}
\item{concrete and abstract semantics ideally should use the same code, for correctness and testing purposes,}
\item{the level of approximation should be a tunable parameter,}
\item{both of the above are achievable with a slight change to the abstract machine's state representation.}
\end{enumerate}

The first two points are the philosophy of AAM: correctness through simplicity, reusability, and sanity checking with concrete semantics.
The final point is the machinery that we recount in this section.

\subsection{The $\boldsymbol\CESKstart$ machine \emph{schema}}

The case studies in this paper have full implementations in PLT Redex~\citep{dvanhorn:Felleisen2009Semantics} available online.\footnote{\url{http://github.com/dvanhorn/aac}}
They all build off the untyped call-by-value lambda calculus, whose semantics we recall in small-step reduction semantics style.

\paragraph{From $\boldsymbol{\beta_v}$ to CEK - Being explicit:}

Expressions consist of variables, applications, and
$\lambda$-abstractions, which are considered values:
\[
\begin{array}{lclcl}
\mexpr & \in & \mathit{Expr} & ::= & \svar\mvar \alt \sapp{\mexpr}{\mexpr} \alt \mval\\[1mm]
\mval  & \in & \Value        & ::= & \slam{\mvar}{\mexpr}
\end{array}
\]
Reduction is characterized by a binary relation, $\beta_v$, which
reduces the application of a function to a value by substituting the
argument for the formal parameter of the function:
\[
\begin{array}{rcl}
(\slam{\mvar}{\mexpr})\;\mval & \beta_v & [\mval/\mvar]\mexpr
\end{array}
\]

A search strategy is given for specifying the order in which
function applications are selected for reduction.  This strategy is
specified as a grammar of evaluation contexts:
\[
\begin{array}{lclcl}
E & \in & \mathit{EvaluationContext} & ::= & [] \alt \sapp{E}{\mexpr} \alt \sapp{\mval}{E} 
\end{array}
\]
This grammar specifies that applications are reduced left-to-right,
and reduction does not occur under $\lambda$-abstractions.  The ``[]''
context is the empty ``hole'' context.  The notation $E[\mexpr]$ means
the expression obtained by replacing the hole of $E$ with
$\mexpr$.

\newcommand{\betastep}{\mathrel{\longmapsto_{\beta_v}}}
Finally, reduction on closed expressions is then given by the $\betastep$ relation:
\begin{align*}
\mexpr \betastep \mexpr' \text{ if } \mexpr = E[\mexpr_0], \mexpr' = E[\mexpr_1]\text{, and }\mexpr_0\;\beta_v\;\mexpr_1\text,\\
\text{
for some $E$, $\mexpr_0$, and $\mexpr_1$.}
\end{align*}

Although concise and well-structured for mathematical proofs, this
semantics sheds little light on the mechanics of an effective
implementation. There are essentially two aspects that need to be
modeled more concretely.

The first aspect is that of substitution.  The $\beta_v$ axiom models
function application via the meta-theoretic notion of substition, but
substitution can be modeled more explicitly by incorporating
an \emph{environment}, which maps a variable to the value that should
be substituted for it.  This gives rise to the notion of
a \emph{closure}, which represents values by $\lambda$-abstractions
paired together with an environment.  The $\beta_v$-axiom is replaced
by an alternative axiom that interprets function application by
extending the environment of a closure with the formal parameter
mapping to the argument in the scope of the body of the function;
it interprets variables, by looking up their value in the environment.

The second aspect is that of search.  The $\betastep$ relation is
defined in terms of evaluation contexts, but gives no guidance on how
to construct this context.  The search for reduction opportunities can
be modeled more explicitly by a stack that represents the evaluation
context in an ``inside-out'' fashion: the inner-most part of the
context is the top frame of the stack.

Modeling substution with closures and evaluation contexts with stacks
yields an abstract machine for the language, the CEK
machine~\cite{dvanhorn:Felleisen2009Semantics}, shown in \autoref{fig:cek}.

\begin{figure}\centering
  \begin{tabular}{rlrl}
    $\mstate \in \CEK$ &\hspace{-3mm}$::= \tpl{\mexpr, \menv, \mkont}$
    & $\mlam \in \Lam$ &\hspace{-3mm}$::= \slam{\mvar}{\mexpr}$\\
    $\mval \in \Value$ &\hspace{-3mm}$::= (\mlam,\menv)$ 
    & $\menv \in \Env$ &\hspace{-3mm}$= \Var \finto \Value$ \\
    $\mkont \in \Kont$ &\hspace{-3mm}$= \Frame^*$ &
    $\mvar \in \Var$ &\\ 
    $\mkframe \in \Frame$ &\multicolumn{2}{l}{\hspace{-3mm}$::= \appl{\mexpr,\menv} \alt \appr{\mval}$} \\[2mm]
  \end{tabular}

  $\mstate \stepto \mstate'$ \\[.5mm]
  \begin{tabular}{r|l}
    \hline\vspace{-3mm}\\
    $\tpl{\svar\mvar, \menv, \mkont}$
    &
    $\tpl{\mval, \mkont}$ if $\mval = \menv(\mvar)$
    \\
    $\tpl{\sapp{\mexpri0}{\mexpri1},\menv,\mkont}$
    &
    $\tpl{\mexpri0,\menv,\kcons{\appl{\mexpri1,\menv}}{\mkont}}$
    \\
    $\tpl{\mval, \kcons{\appl{\mexpr,\menv}}{\mkont}}$
    &
    $\tpl{\mexpr,\menv,\kcons{\appr{\mval}}{\mkont}}$
    \\
    $\tpl{\mval,\kcons{\appr{\slam{\mvar}{\mexpr},\menv}}{\mkont}}$
    &
    $\tpl{\mexpr,\extm{\menv}{\mvar}{\mval},\mkont}$
  \end{tabular}
  \caption{CEK machine}
  \label{fig:cek}
\end{figure}

\paragraph{From CEK to CESK$_t^\star$ - Accounting for space \& time:}

The CEK machine makes explicit the mechanics of substitution and
reduction in context, but doesn't give an explicit account of memory
allocation.  We now reformulate the CEK machine into a heap-based
machine that explicitly allocates memory for variable bindings and
stack frames.  Properly accounting for memory allocation will prove
essential for making computable abstractions, as the key mechanism for
soundly approximating the running of a program will be limiting the
memory available to analyze a program.

The memory heap is modeled as a \emph{store} mapping addresses to
values or frames.  Environments are modified to map variables to
addresses and when a function is applied, we allocate space in the
heap to hold the value of the argument.
To account for the space used by the stack, we change the
representation of the continuation from a inductively defined list of
frames to a linked list structure that is explicitly allocated in the
heap.

In anticipation of approximating machines, we make a slight
departure from the standard model and have the store map addresses to
\emph{sets of} values or frames, denoting the possible values an address may take on.

The CESK$_t^\star$ machine is defined in \autoref{fig:ceskstart}.
%
%
Typically, memory allocation is specified as $\maddr \notin \dom(\mstore)$, but this is just a specification, not an implementation.
Concretely, we rely on an allocation function $\alloc : \CESKstart \to \Addr$ that will produce an address for the machine to use.
The $\alloc$ function should be considered a parameter of the machine; we 
obtain different machines by instantiating different allocation strategies.
The allocation function need not always produce a fresh address (and finite
allocation strategies generate finite approximations of the running of a program).
It is important to observe that if $\alloc$ produces a previously
allocated address, the previous tenant of the address is not
forgotten, just merged as a \emph{possible} result.  This is reflected
in the use of $\sqcup$ when the store is updated, defined as:
\begin{align*}
  \joinm{\mstore}{\maddr}{s} &= \extm{\mstore}{\maddr}{\mstore(\maddr)\cup\set{s}}\text.
\end{align*}
Conversely, when an address is dereferenced, as in the case of
evaluating a variable or popping the top frame of the stack, a single
element is non-deterministically selected among those in the set for
that address.  If $\alloc$ always produces fresh addresses, this set
machinery degenerates so that singleton sets reside at every allocated
address and the non-determinism of address dereference vanishes.

In addition to the explicit machine management of memory, machine
states carry timestamps, written with subscripts.  Timestamps are
drawn from an unspecified set and rely on another parameter of the
machine: the $\tick$ function, which will play a role in the
approximating semantics, but for now can be thought of as a simple
counter.


%

%
%

\begin{figure}\centering
  \begin{tabular}{rlrl}
    $\mstate \in \CESKstart$ &\hspace{-3mm}$::= \tpl{\mexpr, \menv, \mstore, \mkont}_t$  &
$\mkont \in \Kont$ &\hspace{-3mm}$::= \epsilon \alt \kcons{\mkframe}{\maddr}$\\
 $\mstore \in \Store$ &\hspace{-3mm}$= \Addr \finto \wp(\Storeable)$ &
 $\maddr,\maddralt$ &\hspace{-3mm}$\in \Addr$\\ 
    $\menv \in \Env$ &\hspace{-3mm}$= \Var \finto \Addr$ 
& $\mtime,\mtimealt$ &\hspace{-3mm}$\in \Time$\\
    $\alloc$ &\hspace{-3mm}$: \CESKstart \to \Addr$ &
$\Storeable$ &\hspace{-3mm}$::= \mkont \alt \mval$
  \end{tabular}
\\[2mm]

  $\mstate \stepto \mstate'$ \quad $\maddr = \alloc(\mstate)$ \quad $\mtimealt = \tick(\mstate)$ \\
  \begin{tabular}{r|l}
    \hline\vspace{-3mm}\\
    $\tpl{\svar\mvar, \menv, \mstore,\mkont}_\mtime$
    &
    $\tpl{\mval, \mstore,\mkont}_\mtimealt$ if $\mval \in \mstore(\menv(\mvar))$
    \\
    $\tpl{\sapp{\mexpri0}{\mexpri1},\menv,\mstore,\mkont}_\mtime$
    &
    $\tpl{\mexpri0,\menv,\mstore',\kcons{\appl{\mexpri1,\menv}}{\maddr}}_\mtimealt$ \\
    where & $\mstore' = \joinm{\mstore}{\maddr}{\mkont}$
    \\
    $\tpl{\mval,\menv, \mstore,\kcons{\appl{\mexpr,\menv'}}{\maddralt}}_\mtime$
    &
    $\tpl{\mexpr,\menv',\mstore,\kcons{\appr{\mval,\menv}}{\maddralt}}_\mtimealt$
    \\
    $\tpl{\mval,\mstore,\kcons{\appr{\slam{\mvar}{\mexpr},\menv}}{\maddralt}}_\mtime$
    &
    $\tpl{\mexpr,\menv',\mstore',\mkont}_\mtimealt$ if $\mkont \in \mstore(\maddralt)$ \\
    where & $\menv' = \extm{\menv}{\mvar}{\maddr}$ \\
          & $\mstore' = \joinm{\mstore}{\maddr}{\mval}$
  \end{tabular} \\
  \caption{$\CESKstart$ semantics}
  \label{fig:ceskstart}
\end{figure}


\paragraph{Taming computability via \(\boldsymbol\alloc\):}
Allocation is key for both precision and abstraction power.
Fresh allocation concretely runs a program, precisely predicting everything; of course, analysis is undecidable.
If the allocator has a finite codomain, and there are no recursive datatypes in a state's representation, then the state space is finite -- the machine is a finite state machine.
If the allocator freshly allocates addresses for the stack representation, but is finite for everything else, the semantics is indistinguishable from just using the recursive representation of the stack.
If we have just the stack as an unbounded data structure, then the machine has finitely many stack frames and finitely many other components of the state -- the machine is a pushdown system.

In order to make the iteration of $\betastep$ on a program finite and
therefore have a computable graph, we must finitize the recursive
spaces in which the machine \emph{creates new values}.
AAM dictates that finitization can be centralized to one place,
address allocation, by redirecting values in recursive positions
through the store as we've already done.
%
%
To see why the reachable states of a program are finite, let's
consider each component of a machine state.  First, there is the
expression component.  For a given program, there a finite set of
subexpressions and no machine transition produces new expressions, so
there can only be a finite set of reachable expression components.
The environment component maps variable names to addresses.  Variables
are a subset of expressions, so there are a finite set of variables
(for a given program).  Since we've assumed $\alloc$ produces
addresses from a finite set, there can only be a finite set of
environments (the domain and range are finite sets).  The store
compoent maps addresses (finite) to values or continuations.  Values
are pairs of lambda-terms (a subset of expressions, hence finite) and
environments (as we've just determined: finite as well), so there are
a finite number of values.  Continuations are pairs of frames and
addresses.  Frames consist of expression-environment pairs or values,
both of which are finite sets, so continuations are finite too.
Therefore heaps are finite.  The last component is a continuation,
which we've just argued is finite.  Therefore, starting from an
initial configuration there can only be a finite number of states that
can be reached by the $\betastep$ relation assuming a finite $\alloc$
function.

If we run the $\CESKstart$ semantics to explore all possible states, we get a sound approximation of all paths that the $\CESK$ machine will explore, \emph{regardless of the particular $\alloc$ function used}; any $\alloc$ function is sound.
This paper will give a more focused view of the $\Kont$ component.
We said that when just $\Kont$ is unbounded, we have a pushdown system.
Pushdown systems cannot be naively \emph{run} to find all states and describe all paths the $\CESK$ machine can explore; the state space is infinite, therefore this strategy may not terminate.
The pushdown limitation is special because we can always recognize non-termination, stop, and describe the entire state space.
We show that a simple change in state representation can provide this functionality.
We regain the ability to just \emph{run} the semantics and get a finite object that describes all possible paths in the $\CESK$ machine, but with better precision than before.

\section{A refinement for exact stacks}\label{sec:pushdown}
We can exactly represent the stack in the $\CESKstart$ machine with a modified allocation scheme for stacks.
The key idea is that if the address is ``precise enough,'' then every path that leads to the allocation will proceed exactly the same way until the address is dereferenced.
\paragraph{``Precise enough'':}
For the $\CESKstart$ machine, every function evaluates the same way, regardless of the stack.
We should then represent the stack addresses as the components of a function call.
The one place in the $\CESKstart$ machine that continuations are allocated is at $\sapp{\mexpri0}{\mexpri1}$ evaluation.
The expression itself, the environment, the store and the timestamp are necessary components for evaluating $\sapp{\mexpri0}{\mexpri1}$, so then we just represent the stack address as those four things.
The stack is not relevant for its evaluation, so we do not want to store the stack addresses in the same store -- that would also lead to a recursive heap structure.
We will call this new table $\mktab$, because it looks like a stack.
By not storing the continuations in the value store, we separate ``relevant'' components from ``irrelevant'' components.
We split the stack store from the value store and use only the value store in stack addresses.
Stack addresses generally describe the relevant context that lead to their allocation, so we will refer to them henceforth as \emph{contexts}.
The resulting state space is updated here:
  \begin{align*}
    \sa{State} &= \sa{CESK}_t \times \KStore \\
    \Storeable &= \Value \\
    \mkont \in \Kont &::= \epsilon \alt \kcons{\mkframe}{\mctx} \\
    \mctx \in \Context &::=  \tpl{\mexpr,\menv,\mstore}_\mtime \\
    \mktab \in \KStore &= \Context \finto \wp(\Kont) \\
  \end{align*}

The semantics is modified slightly in \autoref{fig:ceskkstart-semantics} to use $\mktab$ instead of $\mstore$ for continuation allocation and lookup.
Given finite allocation, contexts are drawn from a finite space, but are still precise enough to describe an unbounded stack: they hold all the relevant components to find which stacks are possible.
The computed $\stepto$ relation thus represents the full description of a pushdown system of reachable states (and the set of paths).
Of course this semantics does not always define a pushdown system since $\alloc$ can have an unbounded codomain.
The correctness claim\ifntr{\footnote{All theorems and proofs are deferred to the extended version available on \url{arXiv.org}.}} is therefore a correspondence between the same machine but with an unbounded stack, no $\mktab$, and $\alloc, \tick$ functions that behave the same disregarding the different representations (a reasonable assumption).
\ifntr{See the extended paper for details.}

\begin{figure}
  \centering
  $\mastate,\mktab \stepto \mastate',\mktab'$ \quad $\maddr = \alloc(\mastate,\mktab)$ \quad $\mtimealt = \tick(\mastate,\mktab)$ \\
  \begin{tabular}{r|l}
    \hline\vspace{-3mm}\\
    $\tpl{\svar\mvar, \menv, \mstore, \makont}_\mtime,\mktab$
    &
    $\tpl{\mval, \mstore,\makont}_\mtimealt,\mktab$ if $\mval \in \mstore(\menv(\mvar))$
    \\
    $\tpl{\sapp{\mexpri0}{\mexpri1},\menv,\mstore,\makont}_\mtime,\mktab$
    &
    $\tpl{\mexpri0,\menv,\mstore,\kcons{\appl{\mexpri1,\menv}}{\mctx}}_\mtimealt,\mktab'$ \\
    where & $\mctx = \tpl{\sapp{\mexpri0}{\mexpri1},\menv,\mstore}_\mtime$ \\
          & $\mktab' = \joinm{\mktab}{\mctx}{\makont}$
    \\
    $\tpl{\mval,\mstore,\kcons{\appl{\mexpr,\menv'}}{\mctx}}_\mtime,\mktab$
    &
    $\tpl{\mexpr,\menv',\mstore,\kcons{\appr{\mval}}{\mctx}}_\mtimealt,\mktab$
    \\
    $\tpl{\mval,\menv,\mstore,\kcons{\appr{\slam{\mvar}{\mexpr},\menv'}}{\mctx}}_\mtime,\mktab$
    &
    $\tpl{\mexpr,\menv'',\mstore',\makont}_\mtimealt,\mktab$ if $\makont \in \mktab(\mctx)$ \\
    where & $\menv'' = \extm{\menv'}{\mvar}{\maddr}$ \\
          & $\mstore' = \joinm{\mstore}{\maddr}{\mval}$
  \end{tabular}
  \caption{$\CESKKstart$ semantics}
  \label{fig:ceskkstart-semantics}
\end{figure}

\iftr{
\subsection{Correctness}

The high level argument for correctness exploits properties of both machines.
Where the stack is unbounded (call this $\CESKt$), if every state in a trace shares a common tail in their continuations, that tail is \emph{irrelevant}.
This means the tail can be replaced with anything and still produce a valid trace.
We call this property more generally, ``context irrelevance.''
The $\CESKKstart$ machine maintains an invariant on $\mktab$ that says that $\makont \in \mktab(\mctx)$ represents a trace in $\CESKt$ that starts at the base of $\makont$ and reaches $\mctx$ with $\makont$ on top.
We can use this invariant and context irrelevance to translate steps in the $\CESKKstart$ machine into steps in $\CESKt$.
The other way around, we use a proposition that a full stack is represented by $\mktab$ via unrolling and follow a simple simulation argument.

The common tail proposition we will call $\hastail$ and the replacement function we will call $\replacetail$; they both have obvious inductive and recursive definitions respectively.
The invariant is stated with respect to the entire program, $\mexpr_\mathit{pgm}$:
\begin{mathpar}
  \inferrule{ }{\invmktab(\bot)} \quad
  \inferrule{\invmktab(\mktab) \\
      \forall \makont_c \in K. \startstate(\makont_c) \stepto_\CESKt^* \tpl{\mexpr_c,\menv_c,\mstore_c,\append{\mkont_c}{\epsilon}}_{\mtime_c}}
            {\invmktab(\extm{\mktab}{\tpl{\mexpr_c,\menv_c,\mstore_c}_{\mtime_c}}{K})} \\

  \inferrule{
    \startstate(\makont) \stepto_\CESKt^* \tpl{\mexpr,\menv,\mstore,\append{\makont}{\epsilon}}_\mtime \\
    \invmktab(\mktab)}
    {\inv(\tpl{\mexpr,\menv,\mstore,\makont}_\mtime,\mktab)}
  \end{mathpar}
where
\begin{align*}
 \startstate(\epsilon) &= \tpl{\mexpr_\mathit{pgm},\bot,\bot,\epsilon}_{\mtime_0} \\
                \startstate(\kcons{\mkframe}{\tpl{\mexpr_c,\menv_c,\mstore_c}_{\mtime_c}}) &=
                \tpl{\mexpr_c,\menv_c,\mstore_c,\epsilon}_{\mtime_c}
\end{align*}
We use $\append{\cdot}{\epsilon}$ to treat $\mctx$ like $\epsilon$ and construct a continuation in $\Kont$ rather than $\sa{Kont}$.
\begin{lemma}[Context irrelevance]\label{lem:irrelevance}
  For all traces $\mtrace \in \CESKt^*$ and continuations $\mkont$ such that $\hastail(\mtrace,\mkont)$, for any $\mkont'$, $\replacetail(\mtrace,\mkont,\mkont')$ is a valid trace.
\end{lemma}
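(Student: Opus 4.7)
The plan is to prove context irrelevance by induction on the length of the trace $\mtrace$. Writing $\mtrace = \mstate_0 \mstate_1 \cdots \mstate_n$, the hypothesis $\hastail(\mtrace,\mkont)$ means that each state's continuation decomposes as $\append{\iota_i}{\mkont}$ for some prefix $\iota_i$, and $\replacetail(\mtrace,\mkont,\mkont')$ produces the trace in which each such continuation has been rewritten to $\append{\iota_i}{\mkont'}$. The base case of length one is immediate, since a single well-formed $\CESKt$ state with $\mkont$ as a tail remains well-formed after substituting $\mkont'$ for $\mkont$.

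For the inductive step, assume the claim for strictly shorter traces and focus on the last transition $\mstate_{n-1} \stepto \mstate_n$. By the inductive hypothesis the initial segment remains valid under replacement, so it suffices to show that the last transition still fires, with identical source and target, after the tail has been swapped. I proceed by case analysis on the $\CESKt$ rule used, with the key observation that every stack position the rule consults lies strictly above $\mkont$. The variable-lookup rule ignores the continuation, and the application rule extends it on top; both commute with $\replacetail$ trivially. The $\appl$-to-$\appr$ and function-call rules inspect the topmost frame of the continuation; because $\hastail(\mstate_n,\mkont)$ also holds, the prefix $\iota_{n-1}$ must already contain that frame (in the pop case $\iota_{n-1}$ must be nonempty), so the consulted frame comes from $\iota_{n-1}$, not from $\mkont$. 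The frame is therefore unchanged by $\replacetail$, and the residual continuation of $\mstate_n$ agrees with the one produced by applying the same rule to the tail-replaced image of $\mstate_{n-1}$.

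The main technical obstacle is that $\alloc$ and $\tick$ receive the whole state as input, so could \emph{a priori} observe $\mkont$ and produce different results after replacement. The reasonable assumption alluded to just before the lemma---that $\alloc$ and $\tick$ behave equivalently across different representations of the stack---is exactly what rules this out: these parameters may only depend on information common to all such representations, and in particular cannot look inside the irrelevant tail. Granted that, the address returned by $\alloc$ and the timestamp returned by $\tick$ coincide in the original and replaced transitions, so each case of the analysis closes and the induction completes.
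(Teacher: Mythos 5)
Your proof is correct and follows essentially the same route as the paper, whose entire argument is ``simple induction on $\mtrace$ and cases on $\stepto_{\CESKt}$.'' Your additional care about the pop case needing the consulted frame to lie in the prefix, and about $\alloc$/$\tick$ not observing the replaced tail, makes explicit exactly the points the paper leaves implicit (the latter being covered by its stated ``reasonable assumption'').
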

\begin{proof}
  Simple induction on $\mtrace$ and cases on $\stepto_{\CESKt}$.
\end{proof}
\begin{lemma}[$\CESKKstart$ Invariant]\label{lem:invariant}
  For all $\mstate,\mstate' \in \sa{State}$, if $\inv(\mstate)$ and $\mstate \stepto \mstate'$, then $\inv(\mstate')$
\end{lemma}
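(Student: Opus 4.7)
The plan is to proceed by case analysis on $\mstate \stepto \mstate'$, one case per rule of \autoref{fig:ceskkstart-semantics}. In each case I must reestablish both conjuncts of $\inv$: that $\invmktab$ continues to hold of the possibly updated $\mktab'$, and that the successor's top continuation $\makont'$ admits a $\CESKt$-trace from $\startstate(\makont')$ ending at the appropriate unrolled endpoint $\tpl{\mexpr',\menv',\mstore',\append{\makont'}{\epsilon}}_{\mtimealt}$.

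For the two rules that leave $\mktab$ alone and merely rearrange the tip of the continuation (variable lookup and the $\appl$-to-$\appr$ transition), $\invmktab$ is preserved immediately. The outer continuation pointer either stays fixed or retains the same context at its base, so $\startstate$ of the new continuation coincides with $\startstate$ of the old, and I extend the trace handed to me by $\inv(\mstate)$ by the single $\CESKt$ step mirroring the $\CESKKstart$ step; the two machines agree locally on expression, environment, store, and top frame, so the matching $\CESKt$ step is available.

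The application rule records the current continuation under a fresh context: $\mktab' = \joinm{\mktab}{\mctx}{\makont}$ for $\mctx = \tpl{\sapp{\mexpri0}{\mexpri1},\menv,\mstore}_\mtime$. Preexisting entries continue to satisfy their obligations by the inductive hypothesis, and the newly added $\makont$ at $\mctx$ is witnessed precisely by the trace supplied by $\inv(\mstate)$, whose endpoint is exactly $\tpl{\sapp{\mexpri0}{\mexpri1},\menv,\mstore,\append{\makont}{\epsilon}}_\mtime$. For the successor's trace, $\startstate(\kcons{\appl{\mexpri1,\menv}}{\mctx}) = \tpl{\sapp{\mexpri0}{\mexpri1},\menv,\mstore,\epsilon}_\mtime$, and a single $\CESKt$ application step lands at the required $\tpl{\mexpri0,\menv,\mstore,\kcons{\appl{\mexpri1,\menv}}{\epsilon}}_\mtimealt$.

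The function-application (``pop'') rule is the main obstacle: the successor's continuation $\makont$ is drawn from $\mktab(\mctx)$, so the trace from $\inv(\mstate)$ does not reach it directly. I plan to splice two traces via \autoref{lem:irrelevance}. Writing $\mctx = \tpl{\mexpr_c,\menv_c,\mstore_c}_{\mtime_c}$, $\invmktab(\mktab)$ supplies a trace $\pi_1$ from $\startstate(\makont)$ to $\tpl{\mexpr_c,\menv_c,\mstore_c,\append{\makont}{\epsilon}}_{\mtime_c}$, while $\inv(\mstate)$ supplies a trace $\pi_2$ from $\tpl{\mexpr_c,\menv_c,\mstore_c,\epsilon}_{\mtime_c}$ to the pre-state viewed with stack base $\epsilon$. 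The $\CESKt$ rules only touch the top of the stack, so the bottom $\epsilon$ in $\pi_2$ is never consumed, i.e.\ $\hastail(\pi_2,\epsilon)$ holds; this is a small structural observation I would discharge separately by inspection of the $\CESKt$ rules. By \autoref{lem:irrelevance}, replacing $\epsilon$ with $\append{\makont}{\epsilon}$ in $\pi_2$ yields a valid trace, which concatenates onto $\pi_1$; one final $\beta_v$-step in $\CESKt$ then produces the required trace ending at $\tpl{\mexpr,\menv'',\mstore',\append{\makont}{\epsilon}}_\mtimealt$. Because $\mktab$ is not modified, $\invmktab$ is preserved and this case closes.
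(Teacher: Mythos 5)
Your proposal is correct and follows exactly the strategy the paper itself sketches (and then dismisses as ``routine case analysis''): preserve $\invmktab$ rule by rule, noting that the application rule's new $\mktab$ entry is witnessed by the trace from $\inv(\mstate)$, and handle the return rule by splicing the $\invmktab$-supplied trace onto the current trace via \autoref{lem:irrelevance}. The only case requiring real work is the one you identified, and your use of context irrelevance there is precisely the intended argument.
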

\begin{proof}
  Routine case analysis.
\end{proof}
Note that the injection of $\mexpr_\mathit{pgm}$ into $\CESKKstart$, $(\tpl{\mexpr_\mathit{pgm},\bot,\bot,\epsilon}_{\mtime_0},\bot)$, trivially satisfies $\inv$.

The unrolling proposition is the following
\begin{mathpar}
  \inferrule{ }{\epsilon \in \unroll{\mktab}{\epsilon}} \quad
  \inferrule{\makont \in \mktab(\mctx),
             \mkont \in \unroll{\mktab}{\makont}}
            {\kcons{\mkframe}{\mkont} \in \unroll{\mktab}{\kcons{\mkframe}{\mctx}}}
\end{mathpar}
\begin{theorem}[Correctness]\label{thm:pushdown-correct}
  For all expressions $\mexpr_\mathit{pgm}$,
  \begin{itemize}
  \item{{\bf Soundness: } 
        if $\mstate \stepto_{\CESKt} \mstate'$,
        $\inv(\mstate\set{\mkont := \makont},\mktab)$,
        and $\mkont \in \unroll{\mktab}{\makont}$, then
        there are $\mktab',\makont'$ such that
        $\mstate\set{\mkont := \makont},\mktab \stepto_{\CESKKstart} \mstate'\set{\mkont := \makont'},\mktab'$ and $\mkont' \in \unroll{\mktab'}{\makont'}$}
  \item{{\bf Completeness:} if $\mastate,\mktab \stepto_{\CESKKstart} \mastate',\mktab'$
      and $\inv(\mastate,\mktab)$,
      for all $\mkont$, if $\mkont \in \unroll{\mktab}{\mastate.\makont}$ then
      there is a $\mkont'$ such that
      $\mastate\set{\makont := \mkont} \stepto_{\CESKt}
       \mastate'\set{\makont := \mkont'}$ and
       $\mkont' \in \unroll{\mktab}{\mastate'.\makont}$.}
  \end{itemize}
\end{theorem}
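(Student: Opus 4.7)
The plan is to handle soundness and completeness by parallel case analyses on the relevant step relation, leaning on \autoref{lem:invariant} to control what can appear in $\mktab$ and on \autoref{lem:irrelevance} to reconcile the nondeterminism of $\mktab$-lookup with the deterministic stack pop in $\CESKt$. Both machines share all non-stack components, so the cases for variable reference, entering the function position, and entering the argument position are immediate: the environment, store, and timestamp evolve identically, the continuation is threaded through unchanged, and the unrolling witness carries over verbatim. Only the push case (evaluating $\sapp{\mexpri0}{\mexpri1}$) and the pop case (applying a value to a $\lambda$) require thought.

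For the push case (both directions), set $\mctx = \tpl{\sapp{\mexpri0}{\mexpri1},\menv,\mstore}_\mtime$ and check that extending $\mktab$ with $\mctx \mapsto \makont$ makes the new abstract continuation $\kcons{\appl{\mexpri1,\menv}}{\mctx}$ unroll to the new concrete continuation $\kcons{\appl{\mexpri1,\menv}}{\mkont}$ via the second inference rule for $\unroll$. Invariant preservation follows from \autoref{lem:invariant}. For the pop case in soundness, the concrete step pops $\mkframe$ revealing $\mkont_t$; since the abstract continuation is $\kcons{\mkframe}{\mctx}$ and $\mkont \in \unroll{\mktab}{\kcons{\mkframe}{\mctx}}$, the second $\unroll$ rule supplies some $\makont' \in \mktab(\mctx)$ with $\mkont_t \in \unroll{\mktab}{\makont'}$, and this $\makont'$ witnesses the abstract step.

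The crux is the pop case for completeness. The abstract step fixes a particular $\makont' \in \mktab(\mctx)$, but the deterministic concrete pop of $\mkont = \kcons{\mkframe}{\mkont_t}$ just returns whatever tail $\mkont_t$ happened to sit beneath $\mkframe$, which need not unroll to that chosen $\makont'$. Here we invoke \autoref{lem:irrelevance}: by $\inv$, the $\makont'$ selected by the abstract step arose from a $\CESKt$-trace reaching $\mctx$, so its unrolling exhibits some $\mkont_t^\star$ with $\mkont_t^\star \in \unroll{\mktab}{\makont'}$; context irrelevance then licenses replacing the tail $\mkont_t$ below $\mkframe$ with $\mkont_t^\star$ without disturbing the step that consumes $\mkframe$, because that step depends only on $\mkframe$ and the components above it. This alignment of a deterministic concrete pop with a nondeterministic $\mktab(\mctx)$-lookup is the only delicate point; once it is settled via context irrelevance, all remaining obligations are mechanical inspections of the step rules and of the two $\unroll$ clauses.
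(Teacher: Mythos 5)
Your decomposition---case analysis on the step rules, with \autoref{lem:invariant} maintaining the $\mktab$ invariant, the unrolling proposition driving the soundness simulation, and \autoref{lem:irrelevance} reconciling the nondeterministic $\mktab(\mctx)$ lookup with the deterministic concrete pop---is exactly the argument the paper sketches in prose before stating its two lemmas; the paper supplies no further written proof of this theorem, so your elaboration is the intended one, and you have correctly isolated the pop case of completeness as the only delicate step. (You also silently read the final $\unroll{\mktab}{\mastate'.\makont}$ in the completeness clause as $\unroll{\mktab'}{\mastate'.\makont}$ in the push case, which is the only reading under which $\mctx$ is in the table's domain; that is right.)

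One caution about the crux. Your repair replaces the tail $\mkont_t$ of the \emph{source} continuation with $\mkont_t^\star \in \unroll{\mktab}{\makont'}$, so the $\CESKt$ step you exhibit issues from $\mastate\set{\makont := \kcons{\mkframe}{\mkont_t^\star}}$, not from $\mastate\set{\makont := \mkont}$ for the arbitrary $\mkont$ you were handed. Since the concrete pop is deterministic---it must return whatever tail actually sits under $\mkframe$---the literally quantified claim (``for all $\mkont \in \unroll{\mktab}{\mastate.\makont}$ \ldots'') cannot hold when $\mktab(\mctx)$ contains two continuations with disjoint unrollings (say $\epsilon$ and a nonempty one) and the abstract step selects the one your $\mkont_t$ does not witness. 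What your argument actually delivers is the statement with $\mkont$ restricted to those unrollings whose tail lies in $\unroll{\mktab}{\makont'}$ for the $\makont'$ chosen by the abstract step---equivalently, an existential over $\mkont$. That is the sensible reading, matching each abstract choice with its corresponding concrete unrollings, and it is all that is needed downstream; but you should say explicitly that this is the form you establish, rather than presenting the tail swap as discharging the universal as written.
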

}

\paragraph{Revisiting the example}

First we consider what \zcfa{} gives us, to see where pushdown analysis improves.
The important difference is that in \kcfa{}, return points are stored in an address that is linked to the textual location of the function call, plus a $k$-bounded amount of calling history.
So, considering the common $k = 0$, the unknown function call within map (either \texttt{render{-}int} or \texttt{fact}) returns from the context of the second call to \texttt{map} to the context of the first call to \texttt{map}.
The non-tail calls aren't safe from imprecise return flow either.
In \texttt{map}, the self call returns directly to both external calls of \texttt{map}, bypassing the \texttt{cons} calling context.
An exact representation of the stack guards against such nonsensical predictions.

In our presentation, return points are stored in an address that represents the \emph{exact} calling context with respect to the abstract machine's components.
This means when there is a ``merging'' of return points, it really means that two places in the program have requested the exact same thing of a function, even with the same global values.
The function \emph{will} return to both places.
The predicted control flow in the example is as one would expect, or \emph{hope}, an analysis would predict: the correct flow.

\subsection{Engineered semantics for efficiency}\label{sec:eng-frontier}
We cover three optimizations that may be employed to accelerate the fixed-point computation.
\begin{enumerate}
\item{\label{item:chunk}Continuations can be ``chunked'' more coarsely at function boundaries instead of at each frame in order to minimize table lookups.}
\item{We can globalize $\mktab$ with no loss in precision, unlike a global store;%
      it will not need to be stored in the frontier but will need to be tracked by seen states.
      The seen states only need comparison, and a global $\mktab$ increases monotonically, so we can use Shivers' timestamp technique~\citep{ianjohnson:Shivers:1991:CFA}.
      The timestamp technique does not store an entire $\mktab$ in the seen set at each state, but rather how many times the global $\mktab$ has increased.}
\item{Since evaluation is the same regardless of the stack, we can memoize results to short-circuit to the answer.
      The irrelevance of the stack then precludes the need for timestamping the global $\mktab$.}
\end{enumerate}
This last optimization will be covered in more detail in \autoref{sec:memo}.
From here on, this paper will not explicitly mention timestamps.

A secondary motivation for the representation change in \ref{item:chunk} is that flow analyses commonly split control-flow graphs at function call boundaries to enable the combination of intra- and inter-procedural analyses.
In an abstract machine, this split looks like installing a continuation prompt at function calls.
We borrow a representation from literature on delimited continuations~\citep{ianjohnson:Biernacki2006274} to split the continuation into two components: the continuation and meta-continuation.
Our delimiters are special since each continuation ``chunk'' until the next prompt has bounded length.
The bound is roughly the deepest nesting depth of an expression in functions' bodies.
Instead of ``continuation'' and ``meta-continuation'' then, we will use terminology from CFA2 and call the top chunk a ``local continuation,'' and the rest the ``continuation.''\footnote{Since the continuation is either $\epsilon$ or a context, CFA2 calls these ``entries'' to mean execution entry into the program ($\epsilon$) or a function ($\mctx$). One can also understand these as entries in a table ($\mktab$). We stay with the ``continuation'' nomenclature because they represent full continuations.}

\autoref{fig:pushdown-vis} has a visualization of a hypothetical state space.
Reduction relations can be thought of as graphs: each state is a node, and if a state $\mstate$ reduces to $\mstate'$, then there is an edge $\mstate \stepto \mstate'$.
We can also view our various environments that contain pointers (addresses, contexts) as graphs: each pointer is a node, and if the pointer $\mctx$ references an object $\mlkont$ that contains another pointer $\mctx'$, then there is a labeled edge $\mctx \xrightarrow{\mlkont} \mctx'$.
States' contexts point into $\mktab$ to associate each state with a \emph{regular language} of continuations.
The reversed $\mktab$ graph can be read as a group of finite state machines that accepts all the continuations that are possible at each state that the reversed pointers lead to.
The $\kmt$ continuation is this graph's starting state.

\begin{figure}
  \centering
  \begin{tabular}{rlrl}
    $\mastate \in \sa{CESIK}$ &\hspace{-3mm}$= \tpl{\mexpr,\menv,\mstore,\mlkont,\makont}$& $\mlkont \in \LKont$ &\hspace{-3mm}$= \Frame^*$ \\
    & & $\makont \in \Kont$ &\hspace{-3mm}$::= \epsilon \alt \mctx$
  \end{tabular}
  \caption{$\CESIKKstar$ semantic spaces}
  \label{fig:pushdown-spaces}
\end{figure}

\begin{figure}
  \centering
  \includegraphics[scale=0.65]{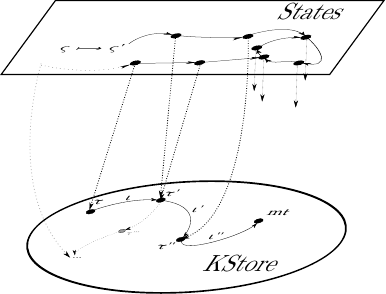}
  \caption{Graph visualization of states and $\mktab$}
  \label{fig:pushdown-vis}
\end{figure}

The resulting shuffling of the semantics to accommodate this new representation is in \autoref{fig:cesikkstar-semantics}.
The extension to $\mktab$ happens in a different rule -- function entry -- so the shape of the context changes to hold the function, argument, and store.
We have a choice of whether to introduce an administrative step to dereference $\mktab$ once $\mlkont$ is empty, or to use a helper metafunction to describe a ``pop'' of both $\mlkont$ and $\mkont$.
Suppose we choose the second because the resulting semantics has a 1-to-1 correspondence with the previous semantics.
A first attempt might land us here:
\begin{align*}
  \pop(\kcons{\mkframe}{\mlkont},\makont,\mktab) &= \set{(\mkframe,\mlkont,\makont)} \\
  \pop(\epsilon,\mctx,\mktab) &= \setbuild{(\mkframe,\mlkont,\makont)}{(\kcons{\mkframe}{\mlkont}, \makont) \in \mktab(\mctx)}
\end{align*}
However, tail calls make the dereferenced $\mctx$ lead to $(\epsilon,\mctx')$.
Because abstraction makes the store grow monotonically in a finite space, it's possible that $\mctx' = \mctx$ and a naive recursive definition of $\pop$ will diverge chasing these contexts.
Now $\pop$ must save all the contexts it dereferences in order to guard against divergence.
So $\pop(\mlkont,\makont,\mktab) = \popaux(\mlkont,\makont,\mktab,\emptyset)$ where
\begin{align*}
  \popaux(\epsilon,\epsilon,\mktab,G) &= \emptyset \\
  \popaux(\kcons{\mkframe}{\mlkont},\makont,\mktab,G) &= \set{(\mkframe,\mlkont,\makont)} \\
  \popaux(\epsilon,\mctx,\mktab,G) &= \setbuild{(\mkframe,\mlkont,\makont)}{(\kcons{\mkframe}{\mlkont}, \makont) \in \mktab(\mctx)} \\
  &\cup \bigcup\limits_{\mctx' \in G'}\popaux(\epsilon,\mctx',\mktab,G\cup G') \\
  \text{where } G' &= \setbuild{\mctx'}{(\epsilon,\mctx') \in \mktab(\mctx)} \setminus G
\end{align*}

In practice, one would not expect $G$ to grow very large.
Had we chosen the first strategy, the issue of divergence is delegated to the machinery from the fixed-point computation.\footnote{CFA2 employs the first strategy and calls it ``transitive summaries.''}
However, when adding the administrative state, the ``seen'' check requires searching a far larger set than we would expect $G$ to be.

\begin{figure}
  \centering
  $\mastate,\mktab \stepto \mastate',\mktab'$ \quad $\maddr = \alloc(\mastate,\mktab)$ \\
  \begin{tabular}{r|l}
    \hline\vspace{-3mm}\\
    $\tpl{\svar\mvar, \menv, \mstore, \mlkont, \makont},\mktab$
    &
    $\tpl{\mval,\mstore,\mlkont,\makont},\mktab$ if $\mval \in \mstore(\menv(\mvar))$
    \\
    $\tpl{\sapp{\mexpri0}{\mexpri1},\menv,\mstore,\mlkont,\makont},\mktab$
    &
    $\tpl{\mexpri0,\menv,\mstore,\kcons{\appl{\mexpri1,\menv}}{\mlkont},\makont},\mktab$
    \\
    $\tpl{\mval, \mstore, \mlkont,\makont},\mktab$
    &
    $\tpl{\mexpr,\menv',\mstore,\kcons{\appr{\mval,\menv}}{\mlkont'},\makont'},\mktab$ \\
    &
    if $\appl{\mexpr,\menv'},\mlkont',\makont' \in \pop(\mlkont,\makont,\mktab)$ \\
    $\tpl{\mval,\mstore, \mlkont,\makont},\mktab$
    &
    $\tpl{\mexpr,\extm{\menv}{\mvar}{\maddr},\mstore',\epsilon,\mctx},\mktab'$ \\
    & if $\appr{\slam{\mvar}{\mexpr},\menv}, \mlkont', \makont' \in \pop(\mlkont,\makont,\mktab)$ \\
    where & $\mstore' = \joinm{\mstore}{\maddr}{\mval}$ \\
    & $\mctx = (\tpl{\slam{\mvar}{\mexpr},\menv},\mval,\mstore)$ \\
    & $\mktab' = \joinm{\mktab}{\mctx}{(\mlkont,\makont)}$
  \end{tabular} \\
  \caption{$\CESIKKstar$ semantics}
  \label{fig:cesikkstar-semantics}
\end{figure}

\begin{align*}
  {\mathcal F}_{\mexpr}(S,R,F,\mktab) &= (S \cup F, R \cup R', F'\setminus S, \mktab') \\
  I &= \bigcup\limits_{s=(\mastate,\mstore) \in F}{\setbuild{(\tpl{\mastate,\mastate'}, \mktab')}{\mastate,\mktab \stepto \mastate',\mktab'}} \\
  R' &= \pi_0 I \qquad F' = \pi_1 R' \qquad \mktab' = \bigsqcup\pi_1 I
\end{align*}
For a program $\mexpr$, we will say $(\emptyset,\emptyset,\set{\tpl{\mexpr,\bot,\bot,\epsilon,\epsilon}},\bot)$ is the bottom element of ${\mathcal F}_{\mexpr}$'s domain.
The ``analysis'' then is then the pair of the $R$ and $\mktab$ components of $\lfp({\mathcal F}_{\mexpr})$.

\iftr{
\paragraph{Correctness} The correctness argument for this semantics is not about single steps but instead about the entire relation that ${\mathcal F}$ computes.
The argument is that the $R$ and $\mktab$ components of the system represent a slice of the unbounded relation $\stepto_{\CESK}$ (restricted to reachable states).
We will show that traces in any $n \in \nat$ times we \emph{unfold} $\stepto_{\CESK}$ from the initial state, there is a corresponding $m$ applications of ${\mathcal F}$ that reify into a relation that exhibit the same trace.
Conversely, any trace in the reification of ${\mathcal F}_{\mexpr}^m(\bot)$ has the same trace in some $n$ unfoldings of $\stepto_{\CESK}$.
For an arbitrary $\alloc$ function, we cannot expect ${\mathcal F}$ to have a fixed point, so this property is the best we can get.
For a finite $\alloc$ function, Kleene's fixed point theorem dictates there is a $m$ such that ${\mathcal F}_{\mexpr}^m(\bot)$ is a fixed point, so every trace in the reified relation is also a trace in an unbounded number of unfoldings of $\stepto_{\CESK}$.
This is the corresponding completeness argument for the algorithm.

\begin{align*}
  \reachrestrict(\mstate_0, \stepto, 0) &= \setbuild{(\mstate_0,\mstate)}{\mstate_0 \stepto \mstate} \\
  \reachrestrict(\mstate_0, \stepto, n+1) &= \stepextend(\reachrestrict(\mstate_0,\stepto,n)) \\
  \textit{where } \stepextend(R) &= R \cup \setbuild{(\mstate,\mstate')}{(\_,\mstate) \in R, \mstate \stepto \mstate'}
\end{align*}
The reification simply realizes all possible complete continuations that a state could have, given $\mktab$:
\begin{mathpar}
  \inferrule{
  \tpl{\tpl{\mexpr,\menv,\mstore,\makont},
      \tpl{\mexpr',\menv',\mstore',\makont'}} \in R \\
  \mkont \in \tails{\mktab}{\makont}}
  {\tpl{\mexpr,\menv,\mstore,\append{\makont}{\mkont}} \stepto_{\reify(S,R,F,\mktab)}
   \tpl{\mexpr',\menv',\mstore'\append{\makont'}{\mkont}}}
\end{mathpar}
The additional judgment about tails is a variant of $\mathit{unroll}$ in order to get a common tail:
\begin{mathpar}
  \inferrule{ }{\epsilon \in \tails{\mktab}{\epsilon}} \quad
  \inferrule{\makont \in \mktab(\mctx) \\
             \mkont \in \unroll{\mktab}{\makont}}
            {\mkont \in \tails{\mktab}{\kcons{\mkframe}{\mctx}}}
\end{mathpar}
\begin{theorem}[Correctness]\label{thm:global-pushdown}
  For all $\mexpr_0$, let $\mstate_0 = \tpl{\mexpr_0,\bot,\bot,\epsilon}$ in
  $\forall n \in \nat, \mstate,\mstate' \in \CESK$:
  \begin{itemize}
  \item{if $(\mstate,\mstate') \in \reachrestrict(\mstate_0,\stepto_{\CESK},n)$ then
      there is an $m$ such that $\mstate \stepto_{\reify({\mathcal F}_{\mexpr_0}^m(\bot))} \mstate'$}
  \item{if $\mstate \stepto_{\reify({\mathcal F}_{\mexpr_0}^n(\bot))} \mstate'$ then
      there is an $m$ such that $(\mstate,\mstate') \in \reachrestrict(\mstate_0,\stepto_{\CESK},m)$}
  \end{itemize}
\end{theorem}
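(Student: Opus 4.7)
The plan is to prove the two directions by induction, with the main effort being a bridge between the frontier-based fixed-point machine $\mathcal{F}_{\mexpr_0}$ and the single-step $\CESKKstart$ correspondence established in Theorem~\ref{thm:pushdown-correct}. First, I would prove a stepping invariant for $\mathcal{F}$: if $(S_n,R_n,F_n,\mktab_n) = \mathcal{F}_{\mexpr_0}^n(\bot)$, then every pair $(\mastate,\mastate') \in R_n$ is the result of a $\CESIKKstar$ step $\mastate,\mktab' \stepto \mastate',\mktab_n$ for some $\mktab' \sqsubseteq \mktab_n$, and $\mastate$ is reachable from the injected initial state through $R_n$. The monotonicity of $\mktab_n$ and $R_n$ across iterations of $\mathcal{F}$ lets us replay any edge under the later, larger $\mktab_n$ without invalidation, because all the $\CESIKKstar$ transitions only read $\mktab$ through $\pop$, which is monotone in $\mktab$. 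This invariant is what makes the fixed-point globalization sound with respect to the local-stepping theorem.

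For the soundness direction (second bullet), I would induct on the derivation of $\mstate \stepto_{\reify(\mathcal{F}_{\mexpr_0}^n(\bot))} \mstate'$. Unpacking the reification rule gives an edge $(\tpl{\ldots,\makont},\tpl{\ldots,\makont'}) \in R_n$ together with some $\mkont \in \tails{\mktab_n}{\makont}$; by the $\mathcal{F}$-invariant this edge came from a $\CESIKKstar$ step, and Theorem~\ref{thm:pushdown-correct} (transported to the $\CESIKKstar$ setting via concatenation of $\mlkont$ and the unrolled $\makont$) yields a corresponding $\CESKt$/$\CESK$ step between states whose continuations unroll $\makont$ and $\makont'$. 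Lemma~\ref{lem:irrelevance} (context irrelevance) then allows us to replace the tail used by $\stepto_\CESK$ with the specific tail $\mkont$ provided by $\tails{\mktab_n}{\makont}$, yielding exactly the concrete edge demanded. The required bound $m$ comes out of the induction that walks back from $\mstate$ to $\mstate_0$ through edges of $R_n$, each edge contributing one unfolding.

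For the completeness direction (first bullet), I would induct on $n$. The base case $n=0$ uses that $\mastate_0$ is in the initial frontier of $\mathcal{F}_{\mexpr_0}$, so one application of $\mathcal{F}$ already installs every outgoing edge from $\mstate_0$ into $R_1$; Theorem~\ref{thm:pushdown-correct} (soundness side) lets us take the given $\CESK$ step $\mstate_0 \stepto \mstate'$ and match it to a $\CESIKKstar$ step that $\mathcal{F}$ explores, with a tail in $\unroll{\mktab}{\makont'}$. The inductive step takes $(\mstate,\mstate') \in \reachrestrict(\mstate_0,\stepto_\CESK,n+1)$; either the edge already lies in $\reachrestrict(\ldots,n)$ and the IH applies, or $\mstate$ was just discovered at step $n$. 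In the latter case we apply the IH to obtain $m$ with some $\mstate_\text{pred} \stepto_{\reify(\mathcal{F}^m(\bot))} \mstate$, which means the $\CESIKKstar$ analogue of $\mstate$ lies in $S_{m+1} \cup F_{m+1}$, so one further application of $\mathcal{F}$ installs the edge witnessing $\mstate \stepto \mstate'$ into $R_{m+2}$, again with the tail choice legalized by context irrelevance.

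The main obstacle, and the reason this proof is not entirely routine, is the interface between three different stack representations: the unbounded $\Kont$ of $\CESK$, the split $(\mlkont,\makont)$ of $\CESIKKstar$, and the implicit regular-language-of-stacks encoded in $\mktab$ via $\unroll{\cdot}{\cdot}$ and $\tails{\cdot}{\cdot}$. All three must be kept aligned at every step of the induction, and the fixed-point computation takes giant parallel steps over all frontier states simultaneously, so $n$ concrete unfoldings correspond to an $m$ of $\mathcal{F}$-applications bounded not by $n$ alone but by the length of the longest path through $R_n$; picking the right $m$ (by taking the maximum over a finite witness set) is the bookkeeping that makes the bi-implication go through, and discharging it cleanly is what one really pays for here.
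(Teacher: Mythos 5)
Your proposal is correct and follows essentially the same route as the paper, whose entire stated proof is ``By induction on $n$''; your elaboration (the monotonicity invariant on $R$ and $\mktab$ across iterations of ${\mathcal F}$, the bridge to \autoref{thm:pushdown-correct} via unrolling, and the use of \autoref{lem:irrelevance} to legalize the tail chosen by $\tails{\mktab}{\makont}$) is exactly the scaffolding that induction needs and that the paper leaves implicit. The only cosmetic quibble is that your ``soundness''/``completeness'' labels are swapped relative to the paper's usage, which does not affect the argument.
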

\begin{proof}
  By induction on $n$.
\end{proof}
}

\subsection{Remarks about cost}

The common tradeoff for performance over precision is to use a global store.
A representation win originally exploited by Shivers~\citep{ianjohnson:Shivers:1991:CFA} is to represent the seen states' stores by the \emph{age} of the store.
A context in this case contains the store age for faster comparison.
Old stores are mostly useless since the current one subsumes them, so a useful representation for the seen set is as a map from the \emph{rest of the state} to the store age it was last visited with.
We will align with the analysis literature and call these ``rest of state'' objects \emph{points}.
Note that since the store age becomes part of the state representation due to ``context,'' there are considerably more points than in the comparable finite state approach.
When we revisit a state because the store age (or $\mktab$ age) is different from the last time we visited it (hence we're visiting a \emph{new state}), we can clobber the old ages.
A finite state approach will use less memory because the seen set will have a smaller domain (fewer distinctions made because of the lack of a ``context'' component).
%
%

\section{Stack inspection and recursive metafunctions}\label{sec:inspection}
Since we just showed how to produce a pushdown system from an abstract machine, some readers may be concerned that we have lost the ability to reason about the stack as a whole.
This is not the case.
The semantics may still refer to $\mktab$ to make judgments about the possible stacks that can be realized at each state.
The key is to interpret the functions making these judgments again with the AAM methodology.

Some semantic features allow a language to inspect some arbitrarily deep part of the stack, or compute a property of the whole stack before continuing.
Java's access control security features are an example of the first form of inspection, and garbage collection is an example of the second.
We will demonstrate both forms are simple first-order metafunctions that the AAM methodology will soundly interpret.
Access control can be modeled with continuation marks, so we demonstrate with the CM machine of \citeauthor{dvanhorn:Clements2004Tailrecursive}.

Semantics that inspect the stack do so with metafunction calls that recur down the stack.
Recursive metafunctions have a semantics as well, hence fair game for AAM.
And, they should always terminate (otherwise the semantics is hosed).
We can think of a simple pattern-matching recursive function as a set of rewrite rules that apply repeatedly until it reaches a result.
Interpreted via AAM, non-deterministic metafunction evaluation leads to a set of possible results.

The finite restriction on the state space carries over to metafunction inputs, so we can always detect infinite loops that abstraction may have introduced and bail out of that execution path.
Specifically, a metafunction call can be seen as an initial state, $s$, that will evaluate through the metafunction's rewrite rules $\stepto$ to compute all terminal states (outputs):
\begin{align*}
  \terminal &: \forall A. \text{relation } A \times A \to \wp(A) \\
  \terminal(\stepto,s) &= \terminalaux(\emptyset,\set{s},\emptyset) \\[2pt]
  \text{where } \terminalaux(S, \emptyset, T) &= T \\
   \terminalaux(S, F, T) &= \terminalaux(S\cup F, F', T\cup T') \\
   & \text{where } T' = \bigcup\limits_{s \in F}{\post(s) \deceq\emptyset \to \set{s},\emptyset} \\
             &\phantom{\text{where }} F' = \bigcup\limits_{s\in F}{\post(s)} \setminus S \\
                   \post(s) &= \setbuild{s'}{s \stepto s'}
\end{align*}

This definition is a typical worklist algorithm.
It builds the set of terminal terms, $T$, by exploring the frontier (or worklist), $F$, and only adding terms to the frontier that have not been seen, as represented by $S$.
If $s$ has no more steps, $\post(s)$ will be empty, meaning $s$ should be added to the terminal set $T$.%
\iftr{
We prove a correctness condition that allows us to reason equationally with $\terminal$ later on:
\begin{lemma}[$\terminalaux$ correct]\label{lem:term-correct}
  Fix $\stepto$. Constrain arbitrary $S, F,T$ such that $T \sqsubseteq S$ and $\forall s\in T, \post(s) = \emptyset$, $F \cap S = \emptyset$, and for all $s \in S$, $\post(s) \subseteq S \cup F$.
  \begin{itemize}
  \item{\textbf{Soundness:} for all $s \in S \cup F$, if $s \stepto^* s_t$ and $\post(s_t) = \emptyset$ then $s_t \in \terminalaux(S,F,T)$.}
  \item{\textbf{Completeness:} for all $s \in \terminalaux(S,F,T)$ there is an $s_0 \in S \cup F$ such that $s_0 \stepto^* s$ and $\post(s) = \emptyset$.}
  \end{itemize}
\end{lemma}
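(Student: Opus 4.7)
The plan is well-founded induction on the measure $|U \setminus S|$, where $U$ is the (finite) universe of states reachable under $\stepto$ from $S \cup F$; this strictly decreases with each recursive call because $F$ is nonempty in the inductive case and disjoint from $S$. Equivalently, one can induct on the recursion depth of $\terminalaux$, which by the same reasoning is bounded. The base case is $F = \emptyset$, where $\terminalaux(S,\emptyset,T) = T$. For soundness, any $s \in S$ has $\post(s) \subseteq S$ by the fourth hypothesis, so any $\stepto$-trace from $s$ stays in $S$; combined with the second hypothesis, the terminal $s_t$ lies in $T$. For completeness, any $s \in T$ satisfies $\post(s) = \emptyset$ directly and $s \stepto^* s$ witnesses the trace.

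The inductive case is $F \neq \emptyset$, where the algorithm recurses with $(S \cup F, F', T \cup T')$. The first step of the plan is to verify that the four hypotheses transfer: $T \cup T' \subseteq S \cup F$ since $T \subseteq S$ and $T' \subseteq F$ by construction; every element of $T \cup T'$ has empty $\post$ by hypothesis and by the definition of $T'$; $F' \cap (S \cup F) = \emptyset$ by the explicit $\setminus S$ together with an additional $\setminus F$ (implicit in a clean reading, otherwise argued by observing that re-entry into $F$ is harmless at the limit); and $\post(s) \subseteq (S \cup F) \cup F'$ holds for $s \in S$ by the original invariant and for $s \in F$ by the definition of $F'$.

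Given that, both directions lift by case analysis on where $s$ lives. For soundness, if $s \in T$ it is already in the result; if $s \in F$ with $\post(s) = \emptyset$ it is in $T'$ and hence in the result; if $s \in F$ takes a step, every successor lies in $F' \cup S = S \cup F \cup F'$ and the IH on the recursive call finishes the trace; if $s \in S$ we chase one step and land in $S \cup F$, again applying the IH. For completeness, the IH applied to the recursive call produces some $s_0 \in (S \cup F) \cup F'$ with $s_0 \stepto^* s$; if $s_0 \in F'$ we prepend a single $\stepto$-step from a state in $F$ (the one whose $\post$ produced $s_0$), yielding the required $s_0' \in S \cup F$.

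The main obstacle is not the induction but the invariant bookkeeping: the proof relies on the fact that a terminal state reached inside $S$ during the base case really is an element of $T$, which requires reading the four hypotheses together with a historical invariant that every state promoted from $F$ to $S$ was either added to $T$ (if its $\post$ was empty) or had its successors added to the subsequent $F'$. Pinning down this invariant precisely — so that the base case of soundness actually goes through — is the delicate step; the inductive case and the measure-decrease argument are then routine.
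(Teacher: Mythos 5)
Your proof takes essentially the same route as the paper, whose entire argument is ``by induction on $\terminalaux$'s recursion scheme''; your well-founded measure $|U\setminus S|$ is just the standard justification of that induction, and the case analysis you give is the intended one. You are also right about the delicate point: as literally stated the hypotheses do not force a terminal state already in $S$ to lie in $T$ (and $F'$ must be read as excluding $S\cup F$, not just $S$), so the invariant needs the strengthening you describe for the soundness base case to close --- the paper elides this, and your bookkeeping is if anything more careful than the original.
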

\begin{proof}
  By induction on $\terminalaux$'s recursion scheme.
\end{proof}
}
Note that it is possible for metafunctions' rewrite rules to themselves use metafunctions, but the \emph{seen} set ($S$) for $\terminal$ must be dynamically bound\footnote{This is a reference to dynamic scope as opposed to lexical scope.} -- it cannot restart at $\emptyset$ upon reentry.
Without this precaution, the host language will exceed its stack limits when an infinite path is explored, rather than bail out.

\subsection{Case study for stack traversal: GC}\label{sec:gc}
Garbage collection is an example of a language feature that needs to crawl the stack, specifically to find live addresses.
We are interested in garbage collection because it can give massive precision boosts to analyses~\citep{dvanhorn:Might:2006:GammaCFA,dvanhorn:Earl2012Introspective}.
Unadulterated, abstract GC inflicts an exponential state space that can destroy performance.
The following function will produce the set of live addresses in the stack:

\begin{align*}
  \kontlive &: \Frame^* \to \wp(\Addr) \\
  \kontlive(\mkont) &= \kontliveaux(\mkont,\emptyset) \\[2pt]
  \kontliveaux(\epsilon,L) &= L \\
  \kontliveaux(\kcons{\mkframe}{\mkont}, L) &= \kontliveaux(\mkont, L\cup\touches(\mkframe)) \\
  \text{where } \touches(\appl{\mexpr,\menv}) &= \touches(\appr{\mexpr,\menv}) = \touches(\mexpr,\menv) \\
                \touches(\mexpr,\menv) &= \setbuild{\menv(\mvar)}{\mvar \in \fv(\mexpr)}
\end{align*}

When interpreted via AAM, the continuation is indirected through $\mktab$ and leads to multiple results, and possibly loops through $\mktab$.
Thus this is more properly understood as
\begin{align*}
  \kontlive(\mktab,\makont) &= \terminal(\stepto, \kontliveaux(\mktab,\makont,\emptyset)) \\[2pt]
  \kontliveaux(\mktab,\epsilon,L) &\stepto L \\
  \kontliveaux(\mktab,\kcons{\mkframe}{\mctx}, L) &\stepto \kontliveaux(\mktab,\makont, L\cup\touches(\mkframe)) \text{ if } \makont \in \mktab(\mctx)
\end{align*}

A garbage collecting semantics can choose to collect the heap with respect to each live set (call this $\Gamma^*$), or, soundly, collect with respect to their union (call this $\hat\Gamma$).\footnote{The garbage collecting version of PDCFA~\citep{ianjohnson:DBLP:journals/jfp/JohnsonSEMH14} evaluates the $\hat\Gamma$ strategy.}
On the one hand we could have tighter collections but more possible states, and on the other hand we can leave some precision behind in the hope that the state space will be smaller.
In the general idea of relevance versus irrelevance, the continuation's live addresses are relevant to execution, but are already implicitly represented in contexts because they must be mapped in the store's domain.

A state is ``collected'' only if live addresses remain in the domain of $\mstore$.
We say a value $\mval \in \mstore(\maddr)$ is live if $\maddr$ is live.
If a value is live, any addresses it touches are live; this is captured by the computation in $\reaches$:
\begin{align*}
  \reaches(\mathit{root},\mstore) &=
 \setbuild{\maddralt}{\maddr \in \mathit{root}, \maddr \leadsto_\mstore^* \maddralt} \\
&  \infer{\mval \in \mstore(\maddr) \\ \maddralt \in \touches(\mval)}{\maddr \leadsto_\mstore \maddralt}
\end{align*}
So the two collection methods are as follows.
Exact GC produces different collected states based on the possible stacks' live addresses:\footnote{It is possible and more efficient to build the stack's live addresses piecemeal as an additional component of each state, precluding the need for $\kontlive$. Each stack in $\mktab$ would also store the live addresses to restore on pop.}
\begin{align*}
  \Gamma^*(\mastate,\mktab) &=
    \setbuild{\mastate\set{\mstore:=\mastate.\mstore|_L}}{L \in \live^*(\mastate,\mktab)} \\
  \live^*(\tpl{\mexpr,\menv,\mstore,\makont},\mktab) &=
    \setbuild{\reaches(\touches(\mexpr,\menv) \cup L, \mstore)}{L \in \kontlive(\mktab,\makont)}
\end{align*}
\begin{equation*}
  \infer{\mastate,\mktab \stepto \mastate',\mktab' \\
         \mastate' \in \Gamma^*(\mstate',\mktab')}
        {\mastate,\mktab \stepto_{\Gamma^*} \mastate',\mktab}
\end{equation*}
And inexact GC produces a single state that collects based on all (known) stacks' live addresses:
\begin{align*}
  \hat\Gamma(\mastate,\mktab) &=
  \mastate\set{\mstore:=\mastate.\mstore|_{\widehat{\live}(\mastate,\mktab)}} \\
  \widehat{\live}(\tpl{\mexpr,\menv,\mstore,\makont},\mktab) &=
    \reaches(\touches(\mexpr,\menv) \cup \bigcup\kontlive(\mktab,\makont), \mstore)
\end{align*}
\begin{equation*}
  \infer{\mastate,\mktab \stepto \mastate',\mktab'}
        {\mastate,\mktab \stepto_{\hat\Gamma} \hat\Gamma(\mstate',\mktab'),\mktab'}
\end{equation*}

Without the continuation store, the baseline GC is
\begin{align*}
  \Gamma(\mstate) &= \mstate\set{\mstore:=\mstate.\mstore|_{\live(\mstate)}} \\
  \live(\mexpr,\menv,\mstore,\mkont) &= \reaches(\touches(\mexpr,\menv)\cup \kontlive(\mkont), \mstore)
\end{align*}
\begin{equation*}
  \infer{\mstate \stepto \mstate'}
        {\mstate \stepto_{\Gamma} \Gamma(\mstate')}  
\end{equation*}
Suppose at arbitrary times we decide to perform garbage collection rather than continue with garbage.
So when $\mastate \stepto \mastate'$, we instead do $\mastate \stepto_\Gamma \mastate'$.
The times we perform GC do not matter for soundness, since we are not analyzing GC behavior.
However, garbage stands in the way of completeness.
Mismatches in the GC application for the different semantics lead to mismatches in resulting state spaces, not just up to garbage in stores, but in spurious paths from dereferencing a reallocated address that was not first collected.

\iftr{
The state space compaction that continuation stores give us makes ensuring GC times match up for the completeness proposition tedious.
Our statement of completeness then will assume both semantics perform garbage collection on every step.

\begin{lemma}[Correctness of $\kontlive$]
  For all $\mktab,\mkont,\makont,L$,
  \begin{itemize}
  \item{\textbf{Soundness:} if $\mkont \in \unroll{\mktab}{\makont}$ then $\kontliveaux(\mkont,L) \in \terminal(\stepto,\kontliveaux(\mktab,\makont,L))$}
  \item{\textbf{Completeness:} for all $L' \in \kontliveaux(\mktab,\makont,L)$ there is a $\mkont \in \unroll{\mktab}{\makont}$ such that $L' = \kontliveaux(\mkont,L)$.}
  \end{itemize}
\end{lemma}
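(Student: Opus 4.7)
The plan is to prove both directions by setting up a correspondence between the recursive structure of the concrete $\kontliveaux(\mkont,L)$ and $\stepto$-paths in the abstract $\kontliveaux(\mktab,\makont,L)$, and then to lift this pointwise correspondence to a statement about the terminal set through Lemma~\ref{lem:term-correct}. The only substantive structure on either side is the pop-and-recur pattern, so the heart of the argument is bookkeeping: induction on the length of $\mkont$ for soundness and induction on the length of the $\stepto$-path for completeness. I would first observe that the only state shape with no outgoing $\stepto$-transitions is a bare address set $L$, since neither rule has a left-hand side of that form; so ``terminal'' and ``produces an answer'' coincide.

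For soundness, I would induct on the derivation $\mkont \in \unroll{\mktab}{\makont}$. In the base case $\mkont = \makont = \epsilon$, one $\stepto$-step takes $\kontliveaux(\mktab,\epsilon,L)$ directly to $L$, which is terminal, so the soundness half of Lemma~\ref{lem:term-correct} places $L$ in $\terminal(\stepto,\kontliveaux(\mktab,\epsilon,L))$. In the inductive case $\makont = \kcons{\mkframe}{\mctx}$ with $\makont' \in \mktab(\mctx)$ and $\mkont = \kcons{\mkframe}{\mkont'}$ with $\mkont' \in \unroll{\mktab}{\makont'}$, a single step goes to $\kontliveaux(\mktab,\makont',L\cup\touches(\mkframe))$, and the inductive hypothesis makes $\kontliveaux(\mkont',L\cup\touches(\mkframe))$ terminal-reachable from that intermediate state, hence from the original by prefixing the step. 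Unfolding the concrete definition gives $\kontliveaux(\mkont,L) = \kontliveaux(\mkont',L\cup\touches(\mkframe))$, closing the case.

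For completeness, suppose $L' \in \terminal(\stepto,\kontliveaux(\mktab,\makont,L))$ (which I take to be the intended reading of the statement's set-membership). The completeness half of Lemma~\ref{lem:term-correct} yields a $\stepto$-path from $\kontliveaux(\mktab,\makont,L)$ to some $s$ with $\post(s)=\emptyset$; by the observation above, $s$ must be a bare address set, so $s = L'$. I would then induct on the length of this path, reading off frames one by one: each non-final step peels a frame $\mkframe$ from the current $\makont$ and selects some $\makont' \in \mktab(\mctx)$, which is precisely the data needed to extend the derivation of $\mkont \in \unroll{\mktab}{\makont}$ by one $\unroll$-layer, while the accumulator evolves identically on both sides. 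The terminating step $\kontliveaux(\mktab,\epsilon,L'') \stepto L''$ corresponds to the base case $\epsilon \in \unroll{\mktab}{\epsilon}$ and pins down $L'' = L'$, so the assembled $\mkont$ witnesses $L' = \kontliveaux(\mkont,L)$.

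The main obstacle is ensuring the completeness induction does not get confused by $\terminalaux$'s seen-set: the algorithm may short-circuit when a repeated context is revisited, so the $\stepto$-path guaranteed by Lemma~\ref{lem:term-correct} is not necessarily the algorithm's literal exploration order. This is harmless for the present argument because the seen-set only affects which transitions extend the frontier, never which states ultimately land in $T$, and that bookkeeping is already absorbed in Lemma~\ref{lem:term-correct}; I just need the abstract existence of some $\stepto$-path to synthesize $\mkont$. A small sanity check I would also make explicit is that the $L$-accumulator in the abstract rule applies $\touches(\mkframe)$ in the \emph{same} order (innermost frame first) as the concrete $\kontliveaux$ does, so that the two accumulations agree termwise rather than merely up to union-commutativity.
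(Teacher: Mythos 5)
Your proof is correct and takes essentially the same route as the paper's own, which is precisely ``induction on the unrolling'' for soundness and ``induction on the trace from completeness in \autoref{lem:term-correct}'' for completeness; your version just spells out the bookkeeping the paper leaves implicit. The one quibble is your claim that bare address sets are the \emph{only} states without $\stepto$-successors: a state $\kontliveaux(\mktab,\kcons{\mkframe}{\mctx},L)$ with $\mktab(\mctx)=\emptyset$ is also stuck and hence lands in the terminal set, but this degenerate case is ruled out by the well-formedness of $\mktab$ that the paper implicitly assumes, and its own one-line proof glosses over the same point.
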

\begin{proof}
  Soundness follows by induction on the unrolling. Completeness follows by induction on the trace from completeness in \autoref{lem:term-correct}.
\end{proof}
\begin{theorem}[Correctness of $\Gamma^*\CESKKstart$]
  For all expressions $\mexpr_0$,
  \begin{itemize}
  \item{{\bf Soundness: } 
        if $\mstate \stepto_{\Gamma\CESKt} \mstate'$,
        $\inv(\mstate\set{\mkont := \makont},\mktab)$,
        and $\mstate.\mkont \in \unroll{\mktab}{\makont}$, then
        there are $\mktab',\makont',\mstore'$ such that
        $\mstate\set{\mkont := \makont},\mktab \stepto_{\Gamma^*\CESKKstart} \mastate',\mktab'$ where
        $\mastate' = \mstate'\set{\mkont := \makont',\mstore:=\mstore'}$ 
        and $\mstate'.\mkont \in \unroll{\mktab'}{\makont'}$
        and finally there is an $L \in \live^*(\mastate',\mktab')$ such that $\mstore'|_L = \mstate'.\mstore|_{\live(\mstate')}$}
  \item{{\bf Completeness:} if $\mastate\equiv\tpl{\mexpr,\menv,\mstore,\makont},\mktab \stepto_{\Gamma^*\CESKKstart} \mastate',\mktab'$ and there is an $L_\mkont \in \kontlive(\mktab,\makont)$ such that $\mstore|_L = \mstore$ (where $L = \reaches(\touches(\mexpr,\menv) \cup L_\mkont, \mstore)$) and $\inv(\mastate,\mktab)$,
      for all $\mkont \in \unroll{\mktab}{\makont}$ such that $\kontlive(\mkont) = L_\mkont$,
      there is a $\mkont'$ such that
      $\mastate\set{\makont := \mkont} \stepto_{\Gamma\CESKt}
      \mastate'\set{\makont := \mkont'}$ (a GC step) and
      $\mkont' \in \unroll{\mktab}{\mastate'.\makont}$}
  \end{itemize}  
\end{theorem}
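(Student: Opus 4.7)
The plan is to layer this correctness argument on top of \autoref{thm:pushdown-correct}: first establish that the underlying non-GC step corresponds, then argue that garbage collection preserves the correspondence modulo the live-set selection. Throughout, the correctness lemma for $\kontlive$ (relating $\terminal(\stepto,\kontliveaux(\mktab,\makont,\emptyset))$ to $\set{\kontlive(\mkont) : \mkont\in\unroll{\mktab}{\makont}}$) will play the role of bridging concrete and pushdown live-address computations.

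For \textbf{soundness}, I would factor $\mstate \stepto_{\Gamma\CESKt} \mstate'$ as a plain step $\mstate \stepto_\CESKt \mstate''$ followed by a GC collapse $\mstate'' \stepto_\Gamma \mstate'$ with $\mstate' = \Gamma(\mstate'')$. Applying \autoref{thm:pushdown-correct} to the plain step yields $\mktab',\makont'$ with $\mstate\set{\mkont:=\makont},\mktab \stepto_{\CESKKstart} \mstate''\set{\mkont:=\makont'},\mktab'$ and $\mstate''.\mkont \in \unroll{\mktab'}{\makont'}$. Now applying the soundness direction of the $\kontlive$ correctness lemma to $\mstate''.\mkont$ and $\makont'$, I obtain $\kontlive(\mstate''.\mkont) \in \terminal(\stepto,\kontliveaux(\mktab',\makont',\emptyset))$, so there is an $L \in \live^*(\mstate''\set{\mkont:=\makont'},\mktab')$ with $L = \live(\mstate'')$. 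Taking this $L$ witnesses a step $\mstate\set{\mkont:=\makont},\mktab \stepto_{\Gamma^*\CESKKstart} \mstate'\set{\mkont:=\makont',\mstore:=\mstore''|_L},\mktab'$, and since $L$ equals $\live(\mstate'')$ and $\mstate''.\mstore = \mstate.\mstore$ agrees with the pushdown store on the relevant addresses (by the invariant), the restriction $\mstore''|_L$ matches $\mstate'.\mstore = \mstate''.\mstore|_{\live(\mstate'')}$ exactly where needed.

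For \textbf{completeness}, I would proceed symmetrically. The hypothesis gives a pushdown GC step $\mastate,\mktab \stepto_{\Gamma^*\CESKKstart} \mastate',\mktab'$, which by definition is an underlying step $\mastate,\mktab \stepto \mastate'',\mktab'$ followed by selection of some $L' \in \live^*(\mastate'',\mktab')$ and restriction to form $\mastate'$. Given a concrete continuation $\mkont \in \unroll{\mktab}{\makont}$ with $\kontlive(\mkont) = L_\mkont$ and the hypothesis that $\mstore$ is already collected with respect to $L_\mkont$, the completeness direction of \autoref{thm:pushdown-correct} furnishes a matching concrete step $\mastate\set{\makont:=\mkont} \stepto_{\CESKt} \mastate''\set{\makont:=\mkont'}$ with $\mkont'\in\unroll{\mktab'}{\mastate''.\makont}$. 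Then invoking the completeness half of the $\kontlive$ lemma to realize the chosen $L'$ as $\kontlive(\mkont')$ for a suitable $\mkont'$ (choosing among possible unrollings if necessary), a concrete GC step $\Gamma$ on $\mastate''\set{\makont:=\mkont'}$ lands in $\mastate'\set{\makont:=\mkont'}$ because both collections reach the same address set from the same roots.

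\paragraph{Main obstacle.}
The hardest part is the alignment of live-set choices in the \emph{exact} variant $\Gamma^*$: both sides are nondeterministic in their GC, and soundness/completeness require that a witness chosen on one side can be mirrored on the other. This is precisely where the $\kontlive$ correctness lemma is essential, but using it requires that the $\unroll$ relation agrees on which continuation realizes a given live-set witness. A subsidiary nuisance is reconciling the fact that the pushdown semantics never reallocates $\mktab$ entries destructively while the concrete store does get restricted by GC; I expect to discharge this by the $\CESKKstart$ invariant (\autoref{lem:invariant}) which guarantees that entries in $\mktab$ only witness traces whose relevant values are preserved in $\mstore$, so that restricting to $L$ does not invalidate any subsequent unrolling needed to continue the simulation.
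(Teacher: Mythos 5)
Your plan follows essentially the same route the paper intends, whose own ``proof'' is only the one-line remark that the arguments are straightforward and use the usual GC lemmas: you decompose the $\Gamma$-step into a plain step handled by \autoref{thm:pushdown-correct} plus a collection, and bridge the concrete and pushdown live-address computations with the correctness lemma for $\kontlive$, implicitly relying on idempotence of $\Gamma$ and $\touches \subseteq \reaches$ for the residual store equalities. The live-set alignment you flag as the main obstacle is genuine, but it is precisely what the paper also delegates to those lemmas, so there is no divergence in approach.
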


\begin{theorem}[Soundness of $\hat\Gamma\CESKKstart$]
  For all expressions $\mexpr_0$,
  if $\mstate \stepto_{\Gamma\CESKt} \mstate'$,
  $\inv(\mstate\set{\mkont := \makont},\mktab)$,
  and $\mstate.\mkont \in \unroll{\mktab}{\makont}$, then
  there are $\mktab',\makont',\mstore''$ such that
  $\mstate\set{\mkont := \makont},\mktab \stepto_{\hat\Gamma\CESKKstart} \mastate',\mktab'$ where
  $\mastate' = \mstate'\set{\mkont := \makont',\mstore:=\mstore''}$
  and $\mstate'.\mkont \in \unroll{\mktab'}{\makont'}$
  and finally $\mstate'.\mstore|_{\live(\mstate')} \sqsubseteq \mstore''|_{\widehat{live}(\mastate',\mktab')}$
\end{theorem}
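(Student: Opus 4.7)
The plan is to factor the argument through the soundness half of Theorem~\ref{thm:pushdown-correct} together with a monotonicity observation relating $\live$ and $\widehat{\live}$.

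First I would decompose the hypothesized $\stepto_{\Gamma\CESKt}$ step as an ordinary $\stepto_{\CESKt}$ step $\mstate \stepto \mstate''$ followed by a collection $\mstate' = \Gamma(\mstate'')$. Applying the soundness half of the pushdown correspondence (Theorem~\ref{thm:pushdown-correct}) to the raw step, using the given $\inv$ and unrolling hypotheses, produces $\mktab'$ and $\makont'$ and a pushdown transition $\mstate\set{\mkont:=\makont},\mktab \stepto_{\CESKKstart} \mstate''\set{\mkont:=\makont'},\mktab'$ with $\mstate''.\mkont \in \unroll{\mktab'}{\makont'}$. Postcomposing with one $\hat\Gamma$ yields the required $\stepto_{\hat\Gamma\CESKKstart}$ transition, with $\mstore'' = \mstate''.\mstore|_{\widehat{\live}(\hat\mstate'',\mktab')}$ for $\hat\mstate'' = \mstate''\set{\mkont := \makont'}$.

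Next I would establish the key monotonicity lemma: whenever $\mkont \in \unroll{\mktab'}{\makont'}$, one has $\live(\mstate'') \subseteq \widehat{\live}(\hat\mstate'',\mktab')$. The completeness direction of the $\kontlive$-correctness lemma places $\kontlive(\mkont)$ inside $\kontlive(\mktab',\makont')$, so it is contained in $\bigcup\kontlive(\mktab',\makont')$. The $\touches(\mexpr,\menv)$ roots agree between the two states, so the root set of $\live$ is contained in that of $\widehat{\live}$, and monotonicity of $\reaches$ in its root argument lifts the containment to the full reachable closure.

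Finally, I would derive the $\sqsubseteq$ conclusion by composing two store restrictions. Since $\Gamma$ is identity on live cells, $\mstate'.\mstore|_{\live(\mstate')}$ coincides with $\mstate''.\mstore|_{\live(\mstate'')}$, using that $\live(\mstate') = \live(\mstate'')$ because the two states share expression, environment, and continuation and $\reaches$ is insensitive to dropping unreachable cells. On the pushdown side, $\mstore''|_{\widehat{\live}(\mastate',\mktab')}$ equals $\mstate''.\mstore|_{\widehat{\live}(\hat\mstate'',\mktab')}$ by idempotence of $\hat\Gamma$ on its own reachable set. The monotonicity lemma then shows that the pushdown restriction is pointwise at least as informative as the concrete GC'd store on $\live(\mstate')$. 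The main obstacle I anticipate is not conceptual but bookkeeping: carefully checking the two GC idempotence properties that justify the restriction rewrites above, and tracking which state's components feed each $\live$ or $\widehat{\live}$ computation across the pre- and post-step stores. The central insight — that any single unrolled stack's live set is subsumed by the union $\widehat{\live}$ takes over all unrollings — falls out immediately from the completeness direction of the $\kontlive$ correctness lemma.
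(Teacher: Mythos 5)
Your proof is correct and follows the route the paper intends: the paper gives no details beyond remarking that the proof is straightforward from the $\kontlive$ correctness lemma and the usual GC lemmas (idempotence of $\Gamma$, $\touches \subseteq \reaches$), and your decomposition into a raw $\stepto_{\CESKt}$ step, an appeal to the soundness half of Theorem~\ref{thm:pushdown-correct}, and monotonicity of $\reaches$ in its root set is exactly that argument. One small correction: the fact that $\mkont \in \unroll{\mktab'}{\makont'}$ implies $\kontlive(\mkont) \in \kontlive(\mktab',\makont')$ (hence $\kontlive(\mkont) \subseteq \bigcup\kontlive(\mktab',\makont')$) is the \emph{soundness} direction of the $\kontlive$ correctness lemma, not the completeness direction you cite twice --- completeness runs the other way, from abstract results back to witnessing unrollings.
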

The proofs are straightforward, and use the usual lemmas for GC, such as idempotence of $\Gamma$ and $\touches \subseteq \reaches$.
}
\subsection{Case study analyzing security features: the CM machine}
The CM machine provides a model of access control: a dynamic branch of execution is given permission to use some resource if a continuation mark for that permission is set to ``grant.''
There are three new forms we add to the lambda calculus to model this feature: {\tt grant}, {\tt frame}, and {\tt test}.
The {\tt grant} construct adds a permission to the stack.
The concern of unforgeable permissions is orthogonal, so we simplify with a set of permissions that is textually present in the program:
\begin{align*}
  \mperm \in \Permissions & \text{ a set} \\
  \Expr &::= \ldots \alt \sgrant{\mperm}{\mexpr} \alt \sframe{\mperm}{\mexpr} \alt \stest{\mperm}{\mexpr}{\mexpr}
\end{align*}
Permissions have the \emph{granted} or \emph{denied} status saved in a permission map within the continuation.
Each stack frame added to the continuation carries the permission map.
The empty continuation also carries a permission map.
\begin{align*}
  \mpermmap \in \PermissionMap &= \Permissions \finto \GD \\
  \mgd \in \GD &::= \Grant \alt \Deny \\
  \mkont \in \Kont &::= \epsilon^\mpermmap \alt \kconsm{\mkframe}{\mpermmap}{\mkont}
\end{align*}
The {\tt frame} construct ensures that only a given set of permissions are allowed, but not necessarily granted.
The security is in the semantics of {\tt test}: we can test if all marks in some set $\mperm$ have been granted in the stack without first being denied; this involves crawling the stack:
\begin{align*}
  \OK(\emptyset,\mkont) &= \mathit{True} \\
  \OK(\mperm,\epsilon^\mpermmap) &= \passp(\mperm,\mpermmap) \\
  \OK(\mperm,\kconsm{\mkframe}{\mpermmap}{\mkont}) &= \passp(\mperm,\mpermmap) \textbf{ and } \OK(\mperm \setminus \mpermmap^{-1}(\Grant), \mkont) \\
  \text{where }\passp(\mperm,\mpermmap) &= \mperm \cap \mpermmap^{-1}(\Deny) \deceq \emptyset
\end{align*}
The set subtraction is to say that granted permissions do not need to be checked farther down the stack.
Continuation marks respect tail calls and have an interface that abstracts over the stack implementation.
Crucially, the added constructs do not add frames to the stack; instead, they update the permission map in the top frame, or if empty, the empty continuation's permission map.
Update for continuation marks:
\begin{align*}
  \extm{\mpermmap}{\mperm}{\mgd} &= \lambda x. x \decin \mperm \to \mgd, \mpermmap(x) \\
  \extm{\mpermmap}{\overline{\mperm}}{\mgd} &= \lambda x. x \decin \mperm \to \mpermmap(x),\mgd \end{align*}

\begin{figure}
  \centering
  \begin{tabular}{r|l}
    \hline\vspace{-3mm}\\
    $\tpl{\sgrant{\mperm}{\mexpr}, \menv, \mstore, \mkont}$
    &
    $\tpl{\mexpr,\menv,\mstore, \extm{\mkont}{\mperm}{\Grant}}$
    \\
    $\tpl{\sframe{\mperm}{\mexpr}, \menv,\mstore,  \mkont}$
    &
    $\tpl{\mexpr,\menv,\mstore, \extm{\mkont}{\overline{\mperm}}{\Deny}}$
    \\
    $\tpl{\stest{\mperm}{\mexpri0}{\mexpri1}, \menv,\mstore,  \mkont}$
    &
    $\tpl{\mexpri0,\menv,\mstore, \mkont}$ if $\mathit{True} = \OK(\mperm,\mkont)$
    \\
    &
    $\tpl{\mexpri1,\menv,\mstore, \mkont}$ if $\mathit{False} = \OK(\mperm,\mkont)$
  \end{tabular}
  \caption{CM machine semantics}
  \label{fig:cm-semantics}
\end{figure}

The abstract version of the semantics has one change on top of the usual continuation store.
The {\tt test} rules are now
\begin{align*}
  \tpl{\stest{\mperm}{\mexpri0}{\mexpri1}, \menv, \mstore, \makont},\mktab
  &\stepto
  \tpl{\mexpri0,\menv,\mstore, \makont},\mktab \text{ if } \mathit{True} \in \widehat{\OK}(\mktab,\mperm,\makont)
  \\
  &\stepto
  \tpl{\mexpri1,\menv,\mstore,\makont},\mktab \text{ if } \mathit{False} \in \widehat{\OK}(\mktab,\mperm,\makont)
\end{align*}
where the a new $\widehat{\OK}$ function uses $\terminal$ and rewrite rules:
\begin{align*}
  \widehat{\OK}(\mktab,\mperm,\makont) &= \terminal(\stepto,\widehat{\OK}^*(\mktab,\mperm,\makont)) \\[2pt]
  \widehat{\OK}^*(\mktab,\emptyset,\makont) &\stepto \mathit{True} \\
  \widehat{\OK}^*(\mktab,\mperm,\epsilon^\mpermmap) &\stepto \passp(\mperm,\mpermmap) \\
  \widehat{\OK}^*(\mktab,\mperm,\kconsm{\mkframe}{\mpermmap}{\mctx}) &\stepto b \text{ where }\\ &\phantom{\stepto} b \in \setbuild{\passp(\mperm,\mpermmap) \textbf{ and } b}{
          \makont \in \mktab(\mctx),
          \\&\phantom{\stepto b \in \{}b \in \widehat{\OK}(\mktab,\mperm\setminus\mpermmap^{-1}(\Grant),\makont))}
\end{align*}
This definition works fine with the reentrant $\terminal$ function with a dynamically bound \emph{seen} set, but otherwise needs to be rewritten to accumulate the Boolean result as a parameter of $\widehat{OK}^*$.

\iftr{
We use the accumulator version in the proof for simplicity.

\begin{lemma}[Correctness of $\widehat{\OK}$]
  For all $\mktab,\mperm,\mkont,\makont$,
  \begin{itemize}
  \item{\textbf{Soundness:} if $\mkont \in \unroll{\mktab}{\makont}$ then $\OK(\mperm,\mkont) \in \widehat{\OK}(\mktab,\mperm,\makont)$.}
  \item{\textbf{Completeness:} if $b \in \widehat{\OK}(\mktab,\mperm,\makont)$ then there is a $\mkont \in \unroll{\mktab}{\makont}$ such that $b = \OK(\mperm,\mkont)$.}
  \end{itemize}
\end{lemma}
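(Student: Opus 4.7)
The plan is to reduce both directions to induction plus the correctness lemma for $\terminalaux$ already established (\autoref{lem:term-correct}). Soundness shows that every concrete outcome is reachable abstractly; completeness shows the reverse via extracting an unrolling witness from each abstract trace. We work with the accumulator version of $\widehat{\OK}^*$ (as suggested in the prose), so the result $b$ is computed incrementally and the rewrite relation produces a single terminal value at the end of each trace.

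For \textbf{soundness}, fix $\mkont \in \unroll{\mktab}{\makont}$ and proceed by induction on the derivation of that unrolling (equivalently, on the length of $\mkont$). In the base case $\makont = \epsilon$ and $\mkont = \epsilon^\mpermmap$ for some $\mpermmap$, and either $\mperm = \emptyset$ (giving $\mathit{True}$) or the rule for the empty continuation fires and yields $\passp(\mperm,\mpermmap)$, matching $\OK(\mperm,\mkont)$ exactly. In the cons case $\makont = \kconsm{\mkframe}{\mpermmap}{\mctx}$ and the unrolling premise supplies some $\makont' \in \mktab(\mctx)$ with $\mkont' \in \unroll{\mktab}{\makont'}$ such that $\mkont = \kconsm{\mkframe}{\mpermmap}{\mkont'}$. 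By induction $\OK(\mperm', \mkont') \in \widehat{\OK}(\mktab,\mperm',\makont')$ where $\mperm' = \mperm\setminus\mpermmap^{-1}(\Grant)$. The abstract rule nondeterministically selects this very $\makont'$ and combines $\passp(\mperm,\mpermmap)$ with the recursive result using \textbf{and}, producing $\OK(\mperm,\mkont)$ as a terminal value of a valid $\stepto$-trace starting from $\widehat{\OK}^*(\mktab,\mperm,\makont)$. Invoking the soundness direction of \autoref{lem:term-correct} puts $\OK(\mperm,\mkont)$ into $\widehat{\OK}(\mktab,\mperm,\makont)$.

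For \textbf{completeness}, fix $b \in \widehat{\OK}(\mktab,\mperm,\makont)$. By the completeness direction of \autoref{lem:term-correct} there is a $\stepto$-trace from $\widehat{\OK}^*(\mktab,\mperm,\makont)$ ending at the terminal value $b$. Induct on the length of this trace. The base cases ($\mperm = \emptyset$ or $\makont = \epsilon^\mpermmap$) are immediate: choose any $\mkont \in \unroll{\mktab}{\makont}$ (the empty-continuation case forces $\mkont = \epsilon^\mpermmap$), and $b$ coincides with $\OK(\mperm,\mkont)$ by definition. In the inductive case $\makont = \kconsm{\mkframe}{\mpermmap}{\mctx}$ and the first step of the trace picks some $\makont' \in \mktab(\mctx)$ and continues with $b' \in \widehat{\OK}(\mktab,\mperm',\makont')$ where $b = \passp(\mperm,\mpermmap) \textbf{ and } b'$ and $\mperm' = \mperm\setminus\mpermmap^{-1}(\Grant)$. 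The induction hypothesis supplies a $\mkont' \in \unroll{\mktab}{\makont'}$ with $\OK(\mperm',\mkont') = b'$. Take $\mkont = \kconsm{\mkframe}{\mpermmap}{\mkont'}$; the unrolling judgment for the cons case is satisfied by the witness $\makont'$, and by definition of $\OK$ we have $\OK(\mperm,\mkont) = \passp(\mperm,\mpermmap) \textbf{ and } \OK(\mperm',\mkont') = b$.

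The main obstacle is the bookkeeping that links the nondeterministic choice of $\makont' \in \mktab(\mctx)$ made inside the trace to the existential witness required by $\unroll{\mktab}{\cdot}$. In the accumulator formulation this is straightforward because each step of $\stepto$ commits to exactly one $\makont'$ and folds $\passp(\mperm,\mpermmap)$ into the accumulator, so the induction reconstructs the stack frame-by-frame; the tail-recursive form of $\OK$ in the concrete semantics then matches the accumulator update exactly. Reentrancy of $\terminal$ (dynamically bound seen set) is not a concern here because the statement is about the input/output relation of $\widehat{\OK}$, not its internal loop detection.
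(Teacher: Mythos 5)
Your proof is correct and follows exactly the route the paper intends: the paper gives no explicit proof for this lemma beyond the remark that the accumulator form of $\widehat{\OK}^*$ is used, but the analogous lemma for $\kontlive$ is proved by induction on the unrolling for soundness and by induction on the trace obtained from the completeness direction of \autoref{lem:term-correct} for completeness, which is precisely your decomposition. The frame-by-frame reconstruction of the unrolling witness from the nondeterministic choices of $\makont' \in \mktab(\mctx)$ is the right bookkeeping and nothing further is needed.
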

With this lemma in hand, the correctness proof is almost identical to the core proof of correctness.
\begin{theorem}[Correctness]
  The abstract semantics is sound and complete in the same sense as \autoref{thm:pushdown-correct}.
\end{theorem}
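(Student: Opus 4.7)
The plan is to mirror the proof structure of \autoref{thm:pushdown-correct}, treating the CM machine as a conservative extension of $\CESKKstart$: the only genuinely new machinery is the stack-inspecting {\tt test} rule, while {\tt grant} and {\tt frame} are purely local updates to the top-frame permission map. First, I would lift the invariant $\inv$ and the unrolling relation $\unroll{\mktab}{\makont}$ verbatim to the CM setting, noting that permission maps travel with frames (and with $\epsilon^\mpermmap$) and so are preserved pointwise by unrolling. I would then re-prove context irrelevance (\autoref{lem:irrelevance}) for the extended transition relation by adding three straightforward cases: for {\tt grant} and {\tt frame} the step only touches the top frame's $\mpermmap$, so any common tail is undisturbed; for {\tt test} the outcome depends on $\OK(\mperm,\mkont)$, which by construction consults frames in order and stops once $\mperm$ is exhausted, so replacing a sufficiently deep tail either leaves $\OK$ unchanged or is irrelevant because $\OK$ already short-circuited.

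Next I would re-establish the $\CESKKstart$ invariant (\autoref{lem:invariant}) for the CM semantics by case analysis on the new rules. The {\tt grant} and {\tt frame} cases are trivial since $\mktab$ is unchanged and the updated continuation is still represented by the same $\makont$ with an updated top frame. The {\tt test} case is equally trivial at the invariant level: neither branch extends $\mktab$, and the continuation passed forward is the same $\makont$, so the trace witness required by $\inv$ is immediate from the hypothesis.

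With these scaffolding lemmas in place, soundness and completeness follow the same bidirectional simulation as in \autoref{thm:pushdown-correct}. For soundness, given a concrete step $\mstate \stepto_{\CESKt} \mstate'$ with $\mkont \in \unroll{\mktab}{\makont}$, I case on the step; the old cases are unchanged, {\tt grant} and {\tt frame} produce matching abstract steps directly, and for {\tt test} I apply the newly cited Correctness of $\widehat{\OK}$ lemma to transfer $\OK(\mperm,\mkont)$ into $\widehat{\OK}(\mktab,\mperm,\makont)$, picking the abstract branch with the same Boolean. For completeness, given an abstract step, the $\widehat{\OK}$ lemma gives a concrete $\mkont \in \unroll{\mktab}{\makont}$ with matching $\OK$, and context irrelevance ensures we can glue the resulting concrete step onto any common tail that we started with.

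The main obstacle is the {\tt test} case, since it is the one place where the stack shape observably influences control flow. The difficulty is not logical but structural: we need the $\widehat{\OK}$ lemma's soundness/completeness to be tight enough that the choice of concrete $\mkont$ witnessed on the completeness side is actually an element of $\unroll{\mktab}{\makont}$ and not merely of some looser over-approximation. Because $\widehat{\OK}$'s rewrite rules pop through $\mktab$ exactly as $\unroll{\mktab}{\cdot}$ does, this correspondence holds by a straightforward mutual induction on the rewrite derivation and the unrolling derivation; with the dynamically bound \emph{seen} set of $\terminal$ preventing spurious non-termination, the argument goes through without surprises.
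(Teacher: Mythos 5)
Your overall strategy matches the paper's: the paper gives no detail beyond observing that, with the correctness lemma for $\widehat{\OK}$ in hand, the argument is ``almost identical to the core proof of correctness,'' and your simulation --- unchanged old cases, local reasoning for {\tt grant} and {\tt frame}, and an appeal to the $\widehat{\OK}$ lemma in the {\tt test} case --- is exactly that. You are also right to weaken the completeness direction to an existential over unrollings at {\tt test} steps, since different unrollings of $\makont$ may legitimately take different branches.

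The one step that would fail as written is your re-proof of context irrelevance (\autoref{lem:irrelevance}). That lemma permits replacing a common tail with an \emph{arbitrary} $\mkont'$, and for the CM machine this is false: $\OK(\mperm,\kconsm{\mkframe}{\mpermmap}{\mkont})$ short-circuits only when $\mperm$ becomes empty or $\passp$ fails, and otherwise recurses all the way to $\epsilon^{\mpermmap}$, whose permission map it consults. A {\tt test} step whose $\OK$ computation reaches the common tail without resolving will in general flip its outcome when that tail is swapped, so ``replacing a sufficiently deep tail leaves $\OK$ unchanged or $\OK$ already short-circuited'' does not hold --- there need not be any sufficiently deep tail. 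What you actually need is the weaker statement that suffices for maintaining $\inv$ and for gluing the trace witnesses $\startstate(\makont)\stepto_{\CESKt}^*\tpl{\mexpr,\menv,\mstore,\append{\makont}{\epsilon}}_\mtime$ onto a real stack: tails may be replaced only by tails that agree on the residual $\OK$ computation at the splice point (same answer for $\OK(\mperm\setminus\cdots,\cdot)$ for the permission sets actually probed), or else restrict the irrelevance lemma to traces in which no {\tt test} probes past the common tail and discharge {\tt test} steps separately through the $\widehat{\OK}$ lemma. With that repair the rest of your outline goes through.
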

}

\section{Relaxing contexts for delimited continuations}\label{sec:delim}
In \autoref{sec:pushdown} we showed how to get a pushdown abstraction by separating continuations from the heap that stores values.
This separation breaks down when continuations themselves become values via first-class control operators.
The glaring issue is that continuations become ``storeable'' and relevant to the execution of functions.
But, it was precisely the \emph{irrelevance} that allowed the separation of $\mstore$ and $\mktab$.
Specifically, the store components of continuations become elements of the store's codomain --- a recursion that can lead to an unbounded state space and therefore a non-terminating analysis.
We apply the AAM methodology to cut out the recursion; whenever a continuation is captured to go into the store, we allocate an address to approximate the store component of the continuation.
We introduce a new environment, $\mmktab$, that maps these addresses to the stores they represent.
The stores that contain addresses in $\mmktab$ are then \emph{open}, and must be paired with $\mmktab$ to be \emph{closed}.
This poses the same problem as before with contexts in storeable continuations.
Therefore, we give up some precision to regain termination by \emph{flattening} these environments when we capture continuations.
Fresh allocation still maintains the concrete semantics, but we necessarily lose some ability to distinguish contexts in the abstract.

\subsection{Case study of first-class control: shift and reset}
We choose to study {\tt shift} and {\tt reset}~\citep{ianjohnson:danvy:filinski:delim:1990} because delimited continuations have proven useful for implementing web servers~\citep{dvanhorn:Queinnec2004Continuations,jay-communication}, providing processes isolation in operating systems~\citep{dvanhorn:Kiselyov2007Delimited}, representing computational effects~\citep{dvanhorn:Filinski1994Representing}, modularly implementing error-correcting parsers~\citep{dvanhorn:Shivers2011Modular}, and finally undelimited continuations are \emph{pass\'e} for good reason~\citep{ianjohnson:kiselyov:against-callcc}.
Even with all their uses, however, their semantics can yield control-flow possibilities that surprise their users.
A \emph{precise} static analysis that illuminates their behavior is then a valuable tool.

Our concrete test subject is the abstract machine for shift and reset adapted from \citet{ianjohnson:Biernacki2006274} in the ``{\bf ev}al, {\bf co}ntinue'' style in \autoref{fig:shift-reset}.
The figure elides the rules for standard function calls.
The new additions to the state space are a new kind of value, $\vcomp{\mkont}$, and a \emph{meta-continuation}, $\mmkont \in \MKont = \Kont^*$ for separating continuations by their different prompts.
Composable continuations are indistinguishable from functions, so even though the meta-continuation is concretely a list of continuations, its conses are notated as function composition: $\mkapp{\mkont}{\mmkont}$.

\begin{figure}
  \centering
  $\mstate \stepto_\SR \mstate'$ \\
  \begin{tabular}{r|l}
    \hline
    $\ev{\sreset{\mexpr}, \menv, \mstore,\mkont, \mmkont}$
    &
    $\ev{\mexpr, \menv, \mstore,\epsilon, \mkapp{\mkont}{\mmkont}}$
    \\
    $\co{\epsilon, \mkapp{\mkont}{\mmkont}, \mval,\mstore}$
    &
    $\co{\mkont, \mmkont, \mval,\mstore}$
    \\
    $\ev{\sshift{\mvar}{\mexpr}, \menv, \mstore,\mkont, \mmkont}$
    &
    $\ev{\mexpr, \extm{\menv}{\mvar}{\maddr},\mstore',\epsilon,\mmkont}$
    \\
    where & $\mstore' = \joinm{\mstore}{\maddr}{\vcomp{\mkont}}$
    \\
    $\co{\kcons{\appr{\vcomp{\mkont'}}}{\mkont}, \mmkont, \mval,\mstore}$
    &
    $\co{\mkont', \mkapp{\mkont}{\mmkont}, \mval,\mstore}$
    \\
    \text{standard rules} \\
    $\ev{\svar\mvar, \menv, \mstore, \mkont,\mmkont}$
    &
    $\co{\mkont,\mmkont,\mval, \mstore}$ if $\mval \in \mstore(\menv(\mvar))$
    \\
    $\ev{\sapp{\mexpri0}{\mexpri1},\menv,\mstore,\mkont,\mmkont}$
    &
    $\ev{\mexpri0,\menv,\mstore,\kcons{\appl{\mexpri1,\menv}}{\mkont},\mmkont}$
    \\
    $\co{\kcons{\appl{\mexpr,\menv}}{\mkont},\mmkont,\mval,\mstore}$
    &
    $\ev{\mexpr,\menv,\mstore,\kcons{\appr{\mval}}{\mkont},\mmkont}$
    \\
    $\co{\kcons{\appr{\slam{\mvar}{\mexpr},\menv}}{\mkont},\mmkont,\mval,\mstore}$
    &
    $\ev{\mexpr,\menv',\mstore',\mkont,\mmkont}$ \\
    where & $\menv' = \extm{\menv}{\mvar}{\maddr}$ \\
          & $\mstore' = \joinm{\mstore}{\maddr}{\mval}$
  \end{tabular}  
  \caption{Machine semantics for shift/reset}
  \label{fig:shift-reset}
\end{figure}

\subsection{Reformulated with continuation stores}
The machine in \autoref{fig:shift-reset} is transformed now to have three new tables: one for continuations ($\mktab_{\makont}$), one as discussed in the section beginning to close stored continuations ($\mmktab$), and one for meta-continuations ($\mktab_{\mamkont}$).
The first is like previous sections, albeit continuations may now have the approximate form that is storeable.
The meta-continuation table is more like previous sections because meta-contexts are not storeable.
Meta-continuations do not have simple syntactic strategies for bounding their size, so we choose to bound them to size 0.
They could be paired with lists of $\sa{Kont}$ bounded at an arbitrary $n \in \nat$, but we simplify for presentation.

Contexts for continuations are still at function application, but now contain $\mmktab$.
Contexts for meta-continuations are in two places: manual prompt introduction via {\tt reset}, or via continuation invocation.
At continuation capture time, continuation contexts are approximated to remove $\mastore$ and $\mmktab$ components.
The different context spaces are thus:
\begin{align*}
  \msctx \in \ExactContext &::= \tpl{\mexpr,\menv,\mastore,\mmktab} \\
  \mactx \in \sa{Context} &::= \tpl{\mexpr,\menv,\maddr} \\
  \mctx \in \Context &::= \mactx \alt \msctx \\
  \mmctx \in \MContext &::= \tpl{\mexpr,\menv,\mastore,\mmktab}
                       \alt \tpl{\mvkont, \maval, \mastore, \mmktab} \\
\end{align*}

Revisiting the graphical intuitions of the state space, we have now $\mvkont$ in states' stores, which represent an \emph{overapproximation} of a set of continuations.
We augment the illustration from \autoref{fig:pushdown-vis} in \autoref{fig:shiftreset-vis} to include the new $\MKStore$ and the overapproximating behavior of $\mvkont$.
The informal notation $\mstore \leadsto \mvkont$ suggests that the state's store \emph{contains}, or \emph{refers to} some $\mvkont$.

\begin{figure}
  \centering
  \includegraphics[scale=0.6]{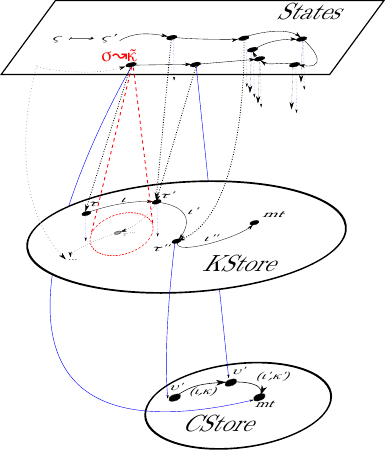}
  \caption{Graphical visualization of states, $\mktab_{\makont}$ and $\mktab_{\mamkont}$.}
  \label{fig:shiftreset-vis}
\end{figure}

\begin{figure}
  \centering
  \begin{tabular}{rlrl}
    $\mastate \in \sa{SR}$ &\multicolumn{3}{l}{\hspace{-3mm}$::= \ev{\mexpr,\menv,\mastore,\mmktab,\makont,\mamkont} \alt \co{\makont,\mamkont,\maval,\mastore,\mmktab}$} \\
    $\State$ & \multicolumn{3}{l}{\hspace{-3mm}$::= \mastate,\mktab_{\makont},\mktab_{\mamkont}$} \\
    $\mmktab \in \MKTab$ &\multicolumn{3}{l}{\hspace{-3mm}$= \Addr \finto \wp(\Store)$} \\
    $\mktab_{\makont} \in \KStore$ &\multicolumn{3}{l}{\hspace{-3mm}$= \ExactContext \finto \wp(\sa{Kont})$} \\
    $\mktab_{\mamkont} \in \MKStore$ &\multicolumn{3}{l}{\hspace{-3mm}$= \MContext \finto \wp(\sa{Kont} \times \sa{MKont})$} \\
    $\makont \in \sa{Kont}$ &\hspace{-3mm}$::= \epsilon \alt \kcons{\mkframe}{\mctx} \alt \mctx$ & $\mvkont \in \VKont$ &\hspace{-3mm}$::= \epsilon \alt \mactx$ \\
    $\mamkont \in \sa{MKont}$ &\hspace{-3mm}$::= \epsilon \alt \mmctx$ & $\maval \in \sa{Value}$ &\hspace{-3mm}$::= \mvkont \alt (\mlam,\menv)$
  \end{tabular}
  \caption{Shift/reset abstract semantic spaces}
  \label{fig:shiftreset-spaces}
\end{figure}
The approximation and flattening happens in $\approximate$:
\begin{equation*}
  \approximate : \MKTab \times \Addr \times \SKont \to \MKTab \times \VKont
\end{equation*}
\begin{align*}
  \approximate(\mmktab,\maddr,\epsilon) &= \mmktab,\epsilon \\
  \approximate(\mmktab,\maddr,\kcons{\mkframe}{\mctx}) &= \mmktab',\kcons{\mkframe}{\mactx} \text{ where } (\mmktab',\mactx) = \approximate(\mmktab,\maddr,\mctx) \\
  \approximate(\mmktab,\maddr,\tpl{\mexpr,\menv,\mastore,\mmktab'}) &= \joinm{\mmktab\sqcup\mmktab'}{\maddr}{\mastore},\kcons{\mkframe}{\tpl{\mexpr,\menv,\maddr}} \\
  \approximate(\mmktab,\maddr,\tpl{\mexpr,\menv,\maddralt}) &= \joinm{\mmktab}{\maddr}{\mmktab(\maddralt)},\kcons{\mkframe}{\tpl{\mexpr,\menv,\maddr}}
\end{align*}
The third case is where continuation closures get flattened together.
The fourth case is when an already approximate continuation is approximated: the approximation is inherited.
Approximating the context and allocating the continuation in the store require two addresses, so we relax the specification of $\alloc$ to allow multiple address allocations in this case.

Each of the four rules of the original shift/reset machine has a corresponding rule that we explain piecemeal.
We will use $\kindastepto$ for steps that do not modify the continuation stores for notational brevity.
We use the above $\approximate$ function in the rule for continuation capture, as modified here.
\begin{equation*}\ev{\sshift{\mvar}{\mexpr},\menv,\mastore,\mmktab,\makont,\mamkont} \kindastepto
  \ev{\mexpr,\menv',\mastore',\mmktab',\epsilon,\mamkont}
\end{equation*}
where
\begin{align*}
  (\maddr,\maddr') &= \alloc(\mastate,\mktab_\makont,\mktab_\mamkont) & \menv' &= \extm{\menv}{\mvar}{\maddr} \\
  (\mmktab',\mvkont) &= \approximate(\mmktab,\maddr',\makont) &
  \mastore' &= \joinm{\mastore}{\maddr}{\mvkont}
\end{align*}

The rule for {\tt reset} stores the continuation and meta-continuation in $\mktab_{\mamkont}$:
\begin{align*}
\ev{\sreset{\mexpr},\menv,\mastore,\mmktab,\makont,\mamkont},\mktab_\makont,\mktab_\mamkont &\stepto
  \ev{\mexpr,\menv,\mastore,\mmktab,\epsilon,\mmctx},\mktab_\makont,\mktab_\mamkont' \\
  \text{where } \mmctx &= \tpl{\mexpr,\menv,\mastore,\mmktab} \\
                \mktab_\mamkont &= \joinm{\mktab_{\mamkont}}{\mmctx}{(\makont,\mamkont)}
\end{align*}

The prompt-popping rule simply dereferences $\mktab_{\mamkont}$:
\begin{align*}
  \co{\epsilon,\mmctx,\maval,\mastore,\mmktab} &\kindastepto \co{\makont,\mamkont,\maval,\mastore,\mmktab} \text{ if } (\makont,\mamkont) \in \mktab_{\mamkont}(\mmctx)
\end{align*}

The continuation installation rule extends $\mktab_{\mamkont}$ at the different context:
\begin{align*}
  \co{\makont,\mamkont,\maval,\mastore,\mmktab},\mktab_\makont,\mktab_\mamkont &\stepto \co{\mvkont,\mmctx,\maval,\mastore,\mmktab},\mktab_\makont,\mktab_\mamkont' \\ 
\text{if } & (\appr{\mvkont},\makont') \in \pop(\mktab_{\makont},\mmktab, \makont) \\
\text{where } \mmctx &= \tpl{\mvkont,\maval,\mastore,\mmktab} \\
              \mktab_\mamkont &= \joinm{\mktab_{\mamkont}}{\mmctx}{(\makont',\mamkont)}
\end{align*}
Again we have a metafunction $\pop$, but this time to interpret approximated continuations:
\begin{align*}
  \pop(\mktab_{\makont}, \mmktab, \makont) &= \popaux(\makont,\emptyset) \\
  \text{where } 
   \popaux(\epsilon, G) &= \emptyset \\
   \popaux(\kcons{\mkframe}{\mctx}, G) &= \set{(\mkframe,\mctx)} \\
   \popaux(\mctx, G) &= \bigcup\limits_{\makont \in G'}(\popaux(\makont, G\cup G')) \\
    \text{where } G' &= \bigcup\limits_{\msctx \in I(\mctx)}{\mktab_{\makont}(\msctx)} \setminus G \\
  I(\msctx) &= \set{\msctx} \\
  I(\tpl{\mexpr,\menv,\maddr}) &=
  \setbuild{\tpl{\mexpr,\menv,\mastore,\mmktab'} \in \dom(\mktab_{\makont})}
           {\mastore \in \mmktab(\maddr),
            \mmktab' \sqsubseteq \mmktab}
\end{align*}
Notice that since we flatten $\mmktab$s together, we need to compare for containment rather than for equality (in $I$).
A variant of this semantics with GC is available in the PLT redex models.

\paragraph{Comparison to CPS transform to remove {\tt shift} and {\tt reset}:}{
We lose precision if we use a CPS transform to compile away {\tt shift} and {\tt reset} forms, because variables are treated less precisely than continuations.
%
%
%
Consider the following program and its CPS transform for comparison:
\begin{small}
\begin{SCodeFlow}\begin{RktBlk}\begin{SingleColumn}\RktPn{(}\RktSym{let*}\mbox{\hphantom{\Scribtexttt{x}}}\RktPn{(}\RktPn{[}\RktSym{id}\mbox{\hphantom{\Scribtexttt{x}}}\RktPn{(}\RktSym{$\lambda$}\mbox{\hphantom{\Scribtexttt{x}}}\RktPn{(}\RktSym{x}\RktPn{)}\mbox{\hphantom{\Scribtexttt{x}}}\RktSym{x}\RktPn{)}\RktPn{]}

\mbox{\hphantom{\Scribtexttt{xxxxxxx}}}\RktPn{[}\RktSym{f}\mbox{\hphantom{\Scribtexttt{x}}}\RktPn{(}\RktSym{$\lambda$}\mbox{\hphantom{\Scribtexttt{x}}}\RktPn{(}\RktSym{y}\RktPn{)}\mbox{\hphantom{\Scribtexttt{x}}}\RktPn{(}\RktSym{shift}\mbox{\hphantom{\Scribtexttt{x}}}\RktSym{k}\mbox{\hphantom{\Scribtexttt{x}}}\RktPn{(}\RktSym{k}\mbox{\hphantom{\Scribtexttt{x}}}\RktPn{(}\RktSym{k}\mbox{\hphantom{\Scribtexttt{x}}}\RktSym{y}\RktPn{)}\RktPn{)}\RktPn{)}\RktPn{)}\RktPn{]}

\mbox{\hphantom{\Scribtexttt{xxxxxxx}}}\RktPn{[}\RktSym{g}\mbox{\hphantom{\Scribtexttt{x}}}\RktPn{(}\RktSym{$\lambda$}\mbox{\hphantom{\Scribtexttt{x}}}\RktPn{(}\RktSym{z}\RktPn{)}\mbox{\hphantom{\Scribtexttt{x}}}\RktPn{(}\RktSym{reset}\mbox{\hphantom{\Scribtexttt{x}}}\RktPn{(}\RktSym{id}\mbox{\hphantom{\Scribtexttt{x}}}\RktPn{(}\RktSym{f}\mbox{\hphantom{\Scribtexttt{x}}}\RktSym{z}\RktPn{)}\RktPn{)}\RktPn{)}\RktPn{)}\RktPn{]}\RktPn{)}

\mbox{\hphantom{\Scribtexttt{xx}}}\RktPn{(}\RktSym{$\le$}\mbox{\hphantom{\Scribtexttt{x}}}\RktPn{(}\RktSym{g}\mbox{\hphantom{\Scribtexttt{x}}}\RktVal{0}\RktPn{)}\mbox{\hphantom{\Scribtexttt{x}}}\RktPn{(}\RktSym{g}\mbox{\hphantom{\Scribtexttt{x}}}\RktVal{1}\RktPn{)}\RktPn{)}\RktPn{)}\end{SingleColumn}\end{RktBlk}\end{SCodeFlow}

\begin{SCodeFlow}\begin{RktBlk}\begin{SingleColumn}\RktPn{(}\RktSym{let*}\mbox{\hphantom{\Scribtexttt{x}}}\RktPn{(}\RktPn{[}\RktSym{id}\mbox{\hphantom{\Scribtexttt{x}}}\RktPn{(}\RktSym{$\lambda$}\mbox{\hphantom{\Scribtexttt{x}}}\RktPn{(}\RktSym{x}\mbox{\hphantom{\Scribtexttt{x}}}\RktSym{k}\RktPn{)}\mbox{\hphantom{\Scribtexttt{x}}}\RktPn{(}\RktSym{k}\mbox{\hphantom{\Scribtexttt{x}}}\RktSym{x}\RktPn{)}\RktPn{)}\RktPn{]}

\mbox{\hphantom{\Scribtexttt{xxxxxxx}}}\RktPn{[}\RktSym{f}\mbox{\hphantom{\Scribtexttt{x}}}\RktPn{(}\RktSym{$\lambda$}\mbox{\hphantom{\Scribtexttt{x}}}\RktPn{(}\RktSym{y}\mbox{\hphantom{\Scribtexttt{x}}}\RktSym{j}\RktPn{)}\mbox{\hphantom{\Scribtexttt{x}}}\RktPn{(}\RktSym{j}\mbox{\hphantom{\Scribtexttt{x}}}\RktPn{(}\RktSym{j}\mbox{\hphantom{\Scribtexttt{x}}}\RktSym{y}\RktPn{)}\RktPn{)}\RktPn{)}\RktPn{]}

\mbox{\hphantom{\Scribtexttt{xxxxxxx}}}\RktPn{[}\RktSym{g}\mbox{\hphantom{\Scribtexttt{x}}}\RktPn{(}\RktSym{$\lambda$}\mbox{\hphantom{\Scribtexttt{x}}}\RktPn{(}\RktSym{z}\mbox{\hphantom{\Scribtexttt{x}}}\RktSym{h}\RktPn{)}
\\\mbox{\hphantom{\Scribtexttt{xxxxxxxxxxx}}}\RktPn{(}\RktSym{h}\mbox{\hphantom{\Scribtexttt{x}}}\RktPn{(}\RktSym{f}\mbox{\hphantom{\Scribtexttt{x}}}\RktSym{z}\mbox{\hphantom{\Scribtexttt{x}}}\RktPn{(}\RktSym{$\lambda$}\mbox{\hphantom{\Scribtexttt{x}}}\RktPn{(}\RktSym{fv}\RktPn{)}
\\\mbox{\hphantom{\Scribtexttt{xxxxxxxxxxxxxxxxxxxx}}}\RktPn{(}\RktSym{id}\mbox{\hphantom{\Scribtexttt{x}}}\RktSym{fv}\mbox{\hphantom{\Scribtexttt{x}}}\RktPn{(}\RktSym{$\lambda$}\mbox{\hphantom{\Scribtexttt{x}}}\RktPn{(}\RktSym{i}\RktPn{)}\mbox{\hphantom{\Scribtexttt{x}}}\RktSym{i}\RktPn{)}\RktPn{)}\RktPn{)}\RktPn{)}\RktPn{)}\RktPn{)}\RktPn{]}\RktPn{)}

\mbox{\hphantom{\Scribtexttt{xx}}}\RktPn{(}\RktSym{g}\mbox{\hphantom{\Scribtexttt{x}}}\RktVal{0}\mbox{\hphantom{\Scribtexttt{x}}}\RktPn{(}\RktSym{$\lambda$}\mbox{\hphantom{\Scribtexttt{x}}}\RktPn{(}\RktSym{g0v}\RktPn{)}\mbox{\hphantom{\Scribtexttt{x}}}\RktPn{(}\RktSym{g}\mbox{\hphantom{\Scribtexttt{x}}}\RktVal{1}\mbox{\hphantom{\Scribtexttt{x}}}\RktPn{(}\RktSym{$\lambda$}\mbox{\hphantom{\Scribtexttt{x}}}\RktPn{(}\RktSym{g1v}\RktPn{)}\mbox{\hphantom{\Scribtexttt{x}}}\RktPn{(}\RktSym{$\le$}\mbox{\hphantom{\Scribtexttt{x}}}\RktSym{g0v}\mbox{\hphantom{\Scribtexttt{x}}}\RktSym{g1v}\RktPn{)}\RktPn{)}\RktPn{)}\RktPn{)}\RktPn{)}\RktPn{)}\end{SingleColumn}\end{RktBlk}\end{SCodeFlow}
\end{small}
The $\CESKKstart$ machine with a monovariant allocation strategy will predict the CPS'd version returns true or false.
In analysis literature, ``monovariant'' means variables get one address, namely themselves.
Our specialized analysis for delimited control will predict the non-CPS'd version returns true.}

\iftr{
\subsection{Correctness}
We impose an order on values since stored continuations are more approximate in the analysis than in $\SR$:
\begin{mathpar}
  \inferrule{ }{\mval \sqsubseteq_{\mktab,\mmktab} \mval} \quad
  \inferrule{\mkont \sqsubseteq \unroll{\mktab,\mmktab}{\mvkont}}
            {\vcomp{\mkont} \sqsubseteq_{\mktab,\mmktab} \mvkont} \quad
  \inferrule{\forall \mval\in\mstore(\maddr).
             \exists \maval\in\mastore(\maddr).
             \mval \sqsubseteq_{\mktab,\mmktab} \maval}
            {\mstore \sqsubseteq_{\mktab,\mmktab} \mastore} \\
  \inferrule{\mkont \sqsubseteq \unroll{\mktab_{\makont},\mmktab}{\makont} \\
             \mmkont \sqsubseteq \unrollC{\mktab_{\makont},\mktab_{\mamkont},\mmktab}{\mamkont} \\
             \mstore \sqsubseteq_{\mktab_{\makont},\mmktab} \mastore}
            {\ev{\mexpr,\menv,\mstore,\mkont,\mmkont} \sqsubseteq
             \ev{\mexpr,\menv,\mastore, \mmktab,\makont,\mamkont}, \mktab_{\makont}, \mktab_{\mamkont}} \\
  \inferrule{\mval \sqsubseteq_{\mktab_{\makont},\mmktab} \maval \\
             \mkont \sqsubseteq \unroll{\mktab_{\makont},\mmktab}{\makont} \\
             \mmkont \sqsubseteq \unrollC{\mktab_{\makont},\mktab_{\mamkont},\mmktab}{\mamkont} \\
             \mstore \sqsubseteq_{\mktab_{\makont},\mmktab} \mastore}
            {\co{\mkont,\mmkont,\mval,\mstore} \sqsubseteq
             \co{\makont,\mamkont,\maval,\mastore, \mmktab}, \mktab_{\makont}, \mktab_{\mamkont}}
\end{mathpar}
Unrolling differs from the previous sections because the values in frames can be approximate.
Thus, instead of expecting the exact continuation to be in the unrolling, we have a judgment that an unrolling approximates a given continuation in \autoref{fig:cont-order} (note we reuse $I$ from $\popaux$'s definition).

\begin{figure}
  \centering
  \begin{mathpar}
    \inferrule{ }{\appl{\mexpr,\menv} \sqsubseteq_{\mktab,\mmktab}
      \appl{\mexpr,\menv}} \quad \inferrule{\mval
      \sqsubseteq_{\mktab,\mmktab}{\maval}}
    {\appr{\mval} \sqsubseteq_{\mktab,\mmktab} \appr{\maval}} \\
    \inferrule{ }{\epsilon \sqsubseteq
      \unroll{\mktab,\mmktab}{\epsilon}} \quad
    \inferrule{\mkframe \sqsubseteq_{\mktab,\mmktab} \makframe \\
      \mkont \sqsubseteq \unroll{\mktab,\mmktab}{\mctx}}
    {\kcons{\mkframe}{\mkont} \sqsubseteq
      \unroll{\mktab,\mmktab}{\kcons{\makframe}{\mctx}}}
    \\
    \inferrule{\makont \in \mktab(\msctx) \quad
      \mkont \sqsubseteq \unroll{\mktab,\mmktab}{\makont}} {\mkont
      \sqsubseteq \unroll{\mktab,\mmktab}{\msctx}}
    \quad
    \inferrule{\msctx \in I(\mktab,\mmktab,\mactx) \quad
      \mkont \sqsubseteq \unroll{\mktab,\mmktab}{\msctx}} {\mkont
      \sqsubseteq \unroll{\mktab,\mmktab}{\mactx}}
    \\
    \inferrule{ }
              {\epsilon \sqsubseteq \unrollC{\mktab_{\makont},\mktab_{\mamkont},\mmktab}{\epsilon}}
    \\
    \inferrule{(\makont,\mamkont) \in \mktab_{\mamkont}(\mmctx) \\
               \mkont \sqsubseteq \unroll{\mktab_{\makont},\mmktab}{\makont} \\
               \mmkont \sqsubseteq \unrollC{\mktab_{\makont},\mktab_{\mamkont},\mmktab}{\mamkont}}
              {\mkapp{\mkont}{\mmkont} \sqsubseteq \unrollC{\mktab_{\makont},\mktab_{\mamkont},\mmktab}{\mmctx}}
  \end{mathpar}
  
  \caption{Order on (meta-)continuations}
\label{fig:cont-order}
\end{figure}
\begin{theorem}[Soundness]
  If $\somestate \stepto_{\SR} \nextstate$, and $\somestate \sqsubseteq \someotherstate$ then there is $\nextotherstate$ such that $\someotherstate \stepto_{\SRSChKKt} \nextotherstate$ and
$\nextstate \sqsubseteq \nextotherstate$.
\end{theorem}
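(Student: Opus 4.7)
The plan is to proceed by case analysis on the concrete reduction $\somestate \stepto_{\SR} \nextstate$, in each case inverting $\somestate \sqsubseteq \someotherstate$ to extract the witnesses that license a matching abstract step, and then reassembling $\nextstate \sqsubseteq \nextotherstate$. The routine cases (variable lookup, application decomposition, argument evaluation, and ordinary $\beta_v$) are essentially the ones already treated in Theorem~\ref{thm:pushdown-correct}: the store order gives a matching abstract value in $\mastore(\menv(\mvar))$, while the continuation unrolling combined with the $\pop$ metafunction supplies the abstract top frame; only $\mktab_\makont$ is (monotonically) extended, so the resulting simulation follows.

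Two auxiliary facts will be used repeatedly and proved first by routine induction on their derivations: (i) monotonicity of $\sqsubseteq_{\mktab,\mmktab}$ and of both unrolling judgments under extensions of $\mktab_\makont$, $\mktab_\mamkont$, and $\mmktab$; and (ii) correctness of $\approximate$, namely that if $\mkont \sqsubseteq \unroll{\mktab_\makont,\mmktab}{\makont}$ and $(\mmktab',\mvkont) = \approximate(\mmktab,\maddr',\makont)$, then $\mmktab \sqsubseteq \mmktab'$ and $\mkont \sqsubseteq \unroll{\mktab_\makont,\mmktab'}{\mvkont}$. Fact~(ii) is proved by induction on $\makont$: the $\epsilon$ and frame cases are immediate, while the two context cases exploit that joining a new store into $\mmktab'(\maddr')$ only adds witnesses for the $I$ predicate used in the $\mactx$-unrolling rule.

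For the new control-flow cases, the $\sreset{\mexpr}$ rule is handled by joining the current abstract $(\makont,\mamkont)$ into $\mktab_\mamkont$ at the freshly formed meta-context $\mmctx=\tpl{\mexpr,\menv,\mastore,\mmktab}$; the meta-continuation-unrolling rule for $\mmctx$ then immediately certifies $\mkapp{\mkont}{\mmkont} \sqsubseteq \unrollC{\mktab_\makont,\mktab_\mamkont',\mmktab}{\mmctx}$. Dually, the prompt-pop rule for $\co{\epsilon,\mkapp{\mkont}{\mmkont},\ldots}$ is proved by inverting the premise $\mkapp{\mkont}{\mmkont} \sqsubseteq \unrollC{\ldots}{\mmctx}$ to recover a pair $(\makont,\mamkont) \in \mktab_\mamkont(\mmctx)$ with the correct sub-orderings, which is exactly what the abstract rule consumes. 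The continuation-installation rule is handled similarly: inverting $\appr{\vcomp{\mkont'}} \sqsubseteq_{\mktab_\makont,\mmktab} \appr{\mvkont}$ yields an abstract $\mvkont$ with $\mkont' \sqsubseteq \unroll{\mktab_\makont,\mmktab}{\mvkont}$, after which $\pop$ produces the required $(\appr{\mvkont},\makont')$, and the $\mktab_\mamkont$ extension closes the case by the same argument as $\sreset{\mexpr}$.

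The main obstacle is the $\sshift{\mvar}{\mexpr}$ case, because the concrete machine stores the whole continuation $\vcomp{\mkont}$ at a fresh address while the abstract machine first flattens it via $\approximate$ and only then joins the resulting $\mvkont$ into $\mastore(\maddr)$. Reestablishing $\sqsubseteq$ here hinges on lemma~(ii): applied to the existing $\mkont \sqsubseteq \unroll{\mktab_\makont,\mmktab}{\makont}$, it produces the enlarged $\mmktab'$ together with $\mkont \sqsubseteq \unroll{\mktab_\makont,\mmktab'}{\mvkont}$, hence $\vcomp{\mkont} \sqsubseteq_{\mktab_\makont,\mmktab'} \mvkont$; monotonicity~(i) then lifts the old store order $\mstore \sqsubseteq_{\mktab_\makont,\mmktab} \mastore$ to $\mstore \sqsubseteq_{\mktab_\makont,\mmktab'} \mastore$, and joining $\vcomp{\mkont}$ into $\mstore(\maddr)$ on the left and $\mvkont$ into $\mastore(\maddr)$ on the right preserves it. The remaining components (expression, extended environment, empty continuation, untouched meta-continuation) are matched directly, so $\nextstate \sqsubseteq \nextotherstate$ as required.
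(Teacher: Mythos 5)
The paper states this soundness theorem with no proof at all, so there is nothing to compare against line by line; your proposal — case analysis on the $\stepto_{\SR}$ step, backed by monotonicity of the orderings under table extension and a correctness lemma for $\approximate$ showing the flattened $\mvkont$ still unrolls (via the $I$ witnesses added to $\mmktab$) to the captured $\mkont$ — is exactly the routine simulation argument the paper's neighboring lemmas gesture at, and it correctly isolates the $\sshift{\mvar}{\mexpr}$ case as the crux. The only detail worth tightening is that the frame-inversion step should be stated as an explicit lemma relating $\pop$ to the unrolling judgment (if $\kcons{\mkframe}{\mkont}$ is approximated by the unrolling of $\makont$, then $\pop$ yields a matching abstract frame and tail), since $\makont$ may be a chain of contexts rather than a literal $\kcons{\makframe}{\mctx}$.
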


\paragraph{Freshness implies completeness}
The high level proof idea is that fresh allocation separates evaluation into a sequence of bounded length paths that have the same store, but the store only grows and distinguishes contexts such that each continuation and metacontinuation have a unique unrolling.
It is an open question whether the addition of garbage collection preserves completeness.
Each context with the same store will have different expressions in them since expressions can only get smaller until a function call, at which point the store grows.
This forms an order on contexts: smaller store means smaller context, and same store but smaller expression (indeed a subexpression) means a smaller context.
Every entry in each enviroment ($\mastore,\mmktab,\mktab_\makont,\mktab_\mamkont$) will map to a unique element, and the continuation stores will have no circular references (the context in the tail of a continuation is strictly smaller than the context that maps to the continuation).
There can only be one context that $I$ maps to for approximate contexts because of the property of stores in contexts.

We distill these intuitions into an invariant about states that we will then use to prove completeness.
\begin{mathpar}
  \inferrule{\forall \maddr\in\dom(\mastore).\exists \maval. \mastore(\maddr)=\set{\maval}\wedge\maval\preceq_\mmktab\mktab_\makont \\
    \forall \maddr\in\dom(\mmktab).\exists\mastore'.\mmktab(\maddr) =\set{\mastore'}\wedge\mastore'\in\pi_3(\dom(\mktab_\makont)) \\
    \forall \msctx\in\dom(\mktab_\makont).\exists\makont. \mktab_\makont(\msctx) = \set{\makont}\wedge \makont \sqsubset_\mmktab^{\mktab_\makont} \msctx\\
    \forall \mmctx\in\dom(\mktab_\mamkont).\exists\mamkont.\mktab_\mamkont(\mmctx) = \set{\mamkont}\wedge \mamkont \sqsubset \mmctx \\
}{\inv^*(\mastore, \mmktab, \mktab_{\makont}, \mktab_{\mamkont})}
 \\
\inferrule{\inv^*(\mastore,\mmktab,\mktab_\makont,\mktab_\mamkont) \\
           \tpl{\mexpr,\menv,\mastore,\mmktab} \sqsubset \dom(\mktab_\makont) \cup \dom(\mktab_\mamkont) \\
           (\exists \tpl{\mexpr_c,\menv,\mastore,\mmktab} \in \dom(\mktab_\makont)) \implies \mexpr \in \mathit{subexpressions}(\mexpr_c) \\
           \makont \preceq_\mmktab \mktab_\makont \\
           \mamkont \preceq \mktab_\mamkont}
          {\inv_\fresh(\ev{\mexpr,\menv,\mastore,\mmktab,\makont,\mamkont},\mktab_\makont,\mktab_\mamkont)} \\
\inferrule{\inv^*(\mastore,\mmktab,\mktab_\makont,\mktab_\mamkont) \\
           \maval \preceq_\mmktab \mktab_\makont \\
           \makont \preceq_\mmktab \mktab_\makont \\
           \mamkont \preceq \mktab_\mamkont}
          {\inv_\fresh(\co{\makont,\mamkont,\maval,\mastore,\mmktab},\mktab_\makont,\mktab_\mamkont)}
\end{mathpar}
Where the order $\preceq$ states that any contexts in the (meta-)continuation are mapped in the given table.
\begin{mathpar}
  \inferrule{ }{(\mlam,\menv) \preceq_\mmktab \mktab_\makont} \quad
  \inferrule{ }{\epsilon \preceq_\mmktab \mktab_\makont} \quad
  \inferrule{ }{\epsilon \preceq \mktab_\mamkont} \quad
  \inferrule{\msctx \in \dom(\mktab_\makont)}{\msctx\preceq_\mmktab \mktab_\makont} \quad
  \inferrule{\mmctx \in \dom(\mktab_\mamkont)}{\mmctx \preceq \mktab_\mamkont}\\
  \inferrule{\exists\mastore.\mmktab(\maddr)=\set{\mastore} \\
             \exists!\mmktab'.\tpl{\mexpr,\menv,\mastore,\mmktab'} \in\dom(\mktab_\makont)\wedge\mmktab' \sqsubseteq \mmktab
           }
            {\tpl{\mexpr,\menv,\maddr} \preceq_\mmktab \mktab_\makont}
\end{mathpar}
And the order $\sqsubset$ states that the contexts in the (meta-)continuation are strictly smaller than the given context.
\begin{mathpar}
  \inferrule{ }{\epsilon \sqsubset_\mmktab^{\mktab_{\makont}}} \quad \inferrule{ }{\epsilon \sqsubset \mmctx} \quad \inferrule{\mctx \sqsubset_\mmktab^{\mktab_\makont} \msctx}{\kcons{\mkframe}{\mctx} \sqsubset_\mmktab^{\mktab_\makont} \msctx} \\
  \inferrule{\mexpr' \in \mathit{subexpressions}(\mexpr)}{\tpl{\mexpr',\menv,\mastore,\mmktab} \sqsubset_\mmktab^{\mktab_\makont} \tpl{\mexpr,\menv,\mastore,\mmktab}} \quad
  \inferrule{\mexpr' \in \mathit{subexpressions}(\mexpr)}{\tpl{\mexpr',\menv,\mastore,\mmktab} \sqsubset \tpl{\mexpr,\menv,\mastore,\mmktab}} \\
  \inferrule{\dom(\mastore) \sqsubset \dom(\mastore')}{\tpl{\_,\_,\mastore,\_} \sqsubset_\mmktab^{\mktab_\makont} \tpl{\_,\_,\mastore',\_}} \quad
  \inferrule{\dom(\mastore) \sqsubset \dom(\mastore')}{\tpl{\_,\_,\mastore,\_} \sqsubset \tpl{\_,\_,\mastore',\_}} \\
  \inferrule{\forall \msctx' \in I(\mktab_\makont,\mmktab,\mactx) \\ \msctx' \sqsubset_\mmktab^{\mktab_\mkont} \msctx}{\mactx \sqsubset_\mmktab^{\mktab_\makont} \msctx}
\end{mathpar}

\begin{lemma}[Freshness invariant]
  If $\alloc$ produces fresh addresses, $\inv_{\fresh}(\mastate,\mktab_\makont,\mktab_\mamkont)$ and
$\mastate,\mktab_{\makont},\mktab_{\mamkont} \stepto
\mastate',\mktab'_{\makont},\mktab'_{\mamkont}$ then
$\inv_{\fresh}(\mastate',\mktab'_{\makont},\mktab'_{\mamkont})$.
\end{lemma}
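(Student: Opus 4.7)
The plan is to proceed by case analysis on the step relation $\stepto$, verifying for each rule that each clause of $\inv^*$ is preserved, that the new current context (or value and continuation, in the $\co$ case) still satisfies $\preceq$ against the updated tables, and that the strict ordering $\sqsubset_\mmktab^{\mktab_\makont}$ and $\sqsubset$ still guard newly inserted bindings. The structure of $\inv_\fresh$ decomposes cleanly along (i) the global $\inv^*$ clauses on $\mastore, \mmktab, \mktab_\makont, \mktab_\mamkont$, (ii) the local claim about the current state's expression, and (iii) the $\preceq$-claims on $\makont, \mamkont$. Freshness of $\alloc$ is used throughout to conclude that no singleton clause of $\inv^*$ is widened to a non-singleton.

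For the rules that do not extend any table -- variable dereference, pushing an $\appl$ frame, the standard $\appl/\appr$ pops, and the prompt-pop $\co{\epsilon,\mmctx,\ldots}$ -- the tables are unchanged, so all $\inv^*$ clauses carry over by hypothesis. The new expression (when in an $\ev$ state) is a subexpression of the old one, so the expression condition is preserved relative to the same witness context in $\mktab_\makont$. The new $\makont$ is either obtained by pushing a frame (a strictly shorter unrolling prefix, trivially $\preceq$) or by $\pop$ / $\mktab_\mamkont$-lookup, whose outputs are $\preceq$-related to the tables directly by $\inv^*$. The three table-extending rules -- function entry, {\tt reset}, and continuation installation -- insert a fresh binding in $\mktab_\makont$ or $\mktab_\mamkont$ at a context whose first component is the current $\mexpr$ or $\mvkont$ and whose store component is the current $\mastore$; freshness of $\alloc$ ensures the inserted meta-context is not already in the domain (because its store component strictly extends any old one), preserving singleton-valuedness, and the strictness ordering holds by a short subexpression argument together with the $\dom(\mastore)\sqsubset\dom(\mastore')$ clause of $\sqsubset$.

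The main obstacle is the {\tt shift} rule, whose essential novelty is the call to $\approximate$. We must verify two coupled facts. First, the join $\mmktab\sqcup\mmktab'$ appearing in the third and fourth clauses of $\approximate$ must remain singleton-valued; this relies on the $\inv^*$ clause that every $\maddralt\in\dom(\mmktab)$ maps to a unique store that also appears as the store component of some $\msctx\in\dom(\mktab_\makont)$. Since the sub-continuation $\mmktab'$ being absorbed was already the environment of some previously stored exact context, any address it shares with $\mmktab$ must carry the identical store under the freshness regime, making the join a no-op on the overlap. Second, the returned $\mvkont$, whose tail approximate context $\tpl{\mexpr,\menv,\maddr}$ refers to the freshly allocated $\maddr$, must satisfy $\preceq_\mmktab$ against the updated $\mktab_\makont$; this reduces to the unique-witness clause $\exists!\mmktab''.\,\tpl{\mexpr,\menv,\mastore,\mmktab''}\in\dom(\mktab_\makont)\wedge\mmktab''\sqsubseteq\mmktab$, which is preserved because the exact contexts containing $\mexpr,\menv,\mastore$ in $\mktab_\makont$ are unchanged by this step and the freshly-allocated $\maddr$ cannot reintroduce ambiguity.

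Once these two facts are discharged for {\tt shift}, the strictness ordering on the new meta-context installed by continuation installation and {\tt reset} is immediate: the new meta-context's expression (or $\mvkont$) plus $\mastore$ component strictly dominates everything inside the $\makont,\mamkont$ being stored, because $\makont$ and $\mamkont$ were $\preceq$-related to the tables \emph{before} the extension, so every context they reach sits at a store no larger than the current $\mastore$, and by the expression-shrinking discipline the fresh meta-context is outside the image of $\sqsubset$. Putting the cases together yields $\inv_\fresh(\mastate',\mktab_\makont',\mktab_\mamkont')$.
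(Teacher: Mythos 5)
Your proposal is correct and takes essentially the same approach as the paper, whose entire proof is the single sentence ``By case analysis on the step.'' Your elaboration of the delicate cases---singleton-preservation under the join in $\approximate$, and the $\preceq$ and $\sqsubset$ bookkeeping for the table-extending rules---supplies exactly the details the paper leaves implicit.
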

\begin{proof}
  By case analysis on the step.
\end{proof}
\begin{theorem}[Complete for fresh allocation]
  If $\alloc$ produces fresh addresses then the resulting semantics is complete with respect to states satisfying the invariant.
\end{theorem}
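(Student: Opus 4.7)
The plan is to run a case analysis on the abstract transition rules of $\SRSChKKt$ and, for each, exhibit the unique corresponding $\SR$ step that the freshness invariant forces. By the previous freshness invariant lemma, $\inv_\fresh$ is preserved along any abstract trace, so I only need a single-step argument. The invariant's singleton conditions ($\mktab_\makont$, $\mktab_\mamkont$, $\mastore$, and $\mmktab$ each map every address/context to a one-element set) together with the strict-order clauses ($\sqsubset_\mmktab^{\mktab_\makont}$ and $\sqsubset$) guarantee that $\unroll{\mktab_\makont,\mmktab}{\makont}$ and $\unrollC{\mktab_\makont,\mktab_\mamkont,\mmktab}{\mamkont}$ are each \emph{singletons}: the acyclicity prevents the unrolling relation from ever branching or looping, and the singleton table condition rules out multiple choices at each lookup. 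So under $\inv_\fresh$, the abstraction relation $\sqsubseteq$ collapses to a bijection between abstract and concrete configurations up to the approximation of stored continuations.

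The key supporting lemmas I would establish are: (i) under $\inv_\fresh$, the metafunction $I(\mktab_\makont,\mmktab,\mactx)$ returns a singleton $\set{\msctx}$, because $\mmktab(\maddr)$ is a singleton $\set{\mastore}$ and the second clause of $\inv^*$ then picks out a unique $\tpl{\mexpr,\menv,\mastore,\mmktab'}\in\dom(\mktab_\makont)$; (ii) consequently $\popaux$ traverses a deterministic chain and returns the unique $(\mkframe,\mctx)$ corresponding to the top frame of the unique concrete continuation; (iii) the store/ktab update operator $\sqcup$ applied at a fresh $\maddr$ coincides with a plain extension $[\maddr\mapsto\set{\cdot}]$, so no value is actually merged on any step. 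Equipped with these, the var, app, co-appL, co-appR, and reset cases are immediate: each abstract rule fires on exactly the unrolling of the precondition, produces the unrolling of the postcondition, and refreshness of $\maddr$ keeps $\inv_\fresh$ intact.

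The one case that actually needs care is the shift rule, where $\approximate(\mmktab,\maddr',\makont)$ flattens the metacontinuation closures hidden inside the captured continuation. Here I would argue that because $\maddr'$ is fresh, $\mmktab\sqcup[\maddr'\mapsto\mastore]$ is a clean extension, and the third/fourth clauses of $\approximate$ replace an exact context $\tpl{\mexpr,\menv,\mastore,\mmktab'}$ (respectively an already approximate $\tpl{\mexpr,\menv,\maddralt}$) by $\tpl{\mexpr,\menv,\maddr}$ paired with $\mmktab'$ being absorbed into the fresh slot. Under $\inv_\fresh$, $\mmktab'\sqsubseteq\mmktab$ already, so the join is just $\mmktab$ itself together with the fresh binding, and the induced $\preceq_{\mmktab\sqcup[\maddr'\mapsto\mastore]}$ on the resulting $\mvkont$ still picks out a unique concrete $\vcomp{\mkont}$. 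Thus $\nextotherstate$ determined by the abstract rule unrolls to precisely the $\nextstate$ produced by the $\SR$ shift rule, and its invariant is re-established via the preservation lemma.

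The main obstacle I anticipate is bookkeeping in the shift case: one must check that flattening two continuation closures under a fresh address does not secretly merge distinguishable contexts — this reduces to showing that the strict-order clause $\mactx \sqsubset_\mmktab^{\mktab_\makont} \msctx$ (defined via $I$) is preserved after the extension, which follows because fresh $\maddr$ means the new approximate context $\tpl{\mexpr,\menv,\maddr}$ has $I(\cdot)$ pointing only to the same $\msctx$ that was already smaller in $\sqsubset$. Everything else is routine case checking against the rules in \autoref{fig:cesikkstar-semantics} adapted for shift/reset, using the earlier $\pop$ analysis verbatim.
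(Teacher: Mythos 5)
Your proposal follows essentially the same route as the paper's (very terse) proof sketch: a case analysis over the abstract rules in which the freshness invariant's singleton codomains and strict ordering force unique, acyclic unrollings, pigeon-holing each abstract step to exactly the corresponding concrete one. Your elaboration of the supporting facts --- uniqueness of $I$, determinism of $\popaux$, $\sqcup$ at a fresh address degenerating to plain extension, and the bookkeeping in the \texttt{shift}/$\approximate$ case --- is consistent with, and usefully more explicit than, what the paper records.
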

\begin{proof}[Proof sketch]
  By case analysis and use of the invariant to exploit the fact the unrollings are unique and the singleton codomains pigeon-hole the possible steps to only concrete ones.
\end{proof}
}

\section{Short-circuiting via ``summarization''}\label{sec:memo}
All the semantics of previous sections have a performance weakness that many analyses share: unnecessary propagation.
Consider two portions of a program that do not affect one another's behavior.
Both can change the store, and the semantics will be unaware that the changes will not interfere with the other's execution.
The more possible stores there are in execution, the more possible contexts in which a function will be evaluated.
Multiple independent portions of a program may be reused with the same arguments and store contents they depend on, but changes to irrelevant parts of the store lead to redundant computation.
The idea of skipping from a change past several otherwise unchanged states to uses of the change is called ``sparseness'' in the literature~\citep{dvanhorn:Reif1977Symbolic,dvanhorn:Wegman1991Constant,dvanhorn:Oh2012Design}.
Memoization is a specialized instance of sparseness; the base stack may change, but the evaluation of the function does not, so given an already computed result we can jump straight to the answer.
We use the vocabulary of ``relevance'' and ``irrelevance'' so that future work can adopt the ideas of sparseness to reuse contexts in more ways.

Recall the core notion of irrelevance: if we have seen the results of a computation before from a different context, we can reuse them.
The semantic counterpart to this idea is a memo table that we extend when popping and appeal to when about to push.
This simple idea works well with a deterministic semantics, but the non-determinism of abstraction requires care.
In particular, memo table entries can end up mapping to multiple results, but not all results will be found at the same time.
Note the memo table space:
\begin{align*}
  \mmemo \in \Memo &= \Context \finto \wp(\Relevant) \\
  \Relevant &::= \tpl{\mexpr,\menv,\mstore}
\end{align*}
There are a few ways to deal with multiple results:
\begin{enumerate}
\item{rerun the analysis with the last memo table until the table doesn't change (expensive),}
\item{short-circuit to the answer but also continue evaluating anyway (negates most benefit of short-circuiting), or}
\item{use a frontier-based semantics like in \autoref{sec:eng-frontier} with global $\mktab$ and $\mmemo$, taking care to
    \begin{enumerate}
    \item{at memo-use time, still extend $\mktab$ so later memo table extensions will ``flow'' to previous memo table uses, and}
    \item{when $\mktab$ and $\mmemo$ are extended at the same context at the same time, also create states that act like the $\mmemo$ extension point also returned to the new continuations stored in $\mktab$.}
    \end{enumerate}}
\end{enumerate}

We will only discuss the final approach.
The same result can be achieved with a one-state-at-a-time frontier semantics, but we believe this is cleaner and more parallelizable.
Its second sub-point we will call the ``push/pop rendezvous.''
The rendezvous is necessary because there may be no later push or pop steps that would regularly appeal to either (then extended) table at the same context.
The frontier-based semantics then makes sure these pushes and pops find each other to continue on evaluating.
In pushdown and nested word automata literature, the push to pop short-circuiting step is called a ``summary edge'' or with respect to the entire technique, ``summarization.''
We find the memoization analogy appeals to programmers' and semanticists' operational intuitions.
A second concern for using memo tables is soundness.
Without the completeness property of the semantics, memoized results in, \eg{}, an inexactly GC'd machine, can have dangling addresses since the possible stacks may have grown to include addresses that were previously garbage.
These addresses would not be garbage at first, since they must be mapped in the store for the contexts to coincide, but during the function evaluation the addresses can become garbage.
If they are supposed to then be live, and are used (presumably they are reallocated post-collection), the analysis will miss paths it must explore for soundness.

\iftr{
Context-irrelevance is a property of the semantics \emph{without} continuation stores, so there is an additional invariant to that of \autoref{sec:pushdown} for the semantics with $\mktab$ and $\mmemo$: $\mmemo$ respects context irrelevance.
\begin{mathpar}
  \inferrule{\dom(\mmemo) \subseteq \dom(\mktab) \\
             \forall \mctx\equiv\tpl{\mexpr_c,\menv_c,\mstore_c} \in \dom(\mmemo),
                     \tpl{\mexpr_r,\menv_r,\mstore_r} \in \mmemo(\mctx), \\
                     \makont\in\mktab(\mctx),
                     \mkont\in\unroll{\mktab}{\makont}. \\
              \exists\mtrace\equiv\tpl{\mexpr_c,\menv_c,\mstore_c,\mkont} \stepto_{\CESKt}^* \tpl{\mexpr_r,\menv_r,\mstore_r,\mkont}. \hastail(\mtrace,\mkont)}
            {\inv_M(\mktab,\mmemo)}
\end{mathpar}
Inexact GC does \emph{not} respect context irrelevance for the same reasons it is not complete: some states are spurious, meaning some memo table entries will be spurious, and the expected path in the invariant will not exist.
The reason we use unrolled continuations instead of simply $\epsilon$ for this (balanced) path is precisely for stack inspection reasons.
}

 \begin{figure}
   \begin{center}
     $\mastate,\mktab,\mmemo \stepto
     \mastate',\mktab',\mmemo'$
     \begin{tabular}{r|l}
       \hline\vspace{-3mm}\\
       $\tpl{\sapp{\mexpri0}{\mexpri1},\menv,\mstore,\makont},\mktab,\mmemo$
       &
       $\tpl{\mexpri0,\menv,\mstore,\kcons{\appl{\mexpri1,\menv}}{\mctx}},\mktab,\mmemo$ \\
       & \quad if $\mctx \notin\dom(\mmemo)$, or \\
       &
       $\tpl{\mexpr',\menv',\mstore',\makont},\mktab',\mmemo$ \\
       & \quad if $\tpl{\mexpr',\menv',\mstore'} \in \mmemo(\mctx)$ \\
       where & $\mctx = \tpl{\sapp{\mexpri0}{\mexpri1},\menv,\mstore}$ \\
       & $\mktab' = \joinm{\mktab}{\mctx}{\makont}$
       \\
       $\tpl{\mval,\mstore,\kcons{\appr{\slam{\mvar}{\mexpr},\menv}}{\mctx}},\mktab,\mmemo$
       &
       $\tpl{\mexpr,\menv',\mstore',\makont},\mktab,\mmemo'$ if $\makont \in \mktab(\mctx)$ \\
       where & $\menv' = \extm{\menv}{\mvar}{\maddr}$ \\
       & $\mstore' = \joinm{\mstore}{\maddr}{\mval}$ \\
       & $\mmemo' = \joinm{\mmemo}{\mctx}{\tpl{\mexpr,\menv',\mstore'}}$
     \end{tabular}
   \end{center}
   \caption{Important memoization rules}
   \label{fig:memo}
 \end{figure}

The rules in \autoref{fig:memo} are the importantly changed rules from \autoref{sec:pushdown} that short-circuit to memoized results.
The technique looks more like memoization with a $\CESIKKstart$ machine, since the memoization points are truly at function call and return boundaries.
The $\pop$ function would need to also update $\mmemo$ if it dereferences through a context, but otherwise the semantics are updated \emph{mutatis mutandis}.

\begin{equation*}
  {\mathcal F}_{\mexpr}(S,R,F,\mktab,\mmemo) = (S \cup F, R \cup R', F'\setminus S, \mktab', \mmemo')
\end{equation*}
where

\begin{tabular}{rlrlrl}
  $I$ &
  \multicolumn{5}{l}{
    \hspace{-3mm}$=\bigcup\limits_{\mstate \in
      F}{\setbuild{(\tpl{\mstate,\mstate'}, \mktab',\mmemo')}{\mstate,\mktab,\mmemo
        \stepto \mstate',\mktab',\mmemo'}}$}
\\
   $R'$ &\hspace{-3mm}$= \pi_0 I$ & $\mktab'$ & \hspace{-3mm}$= \bigsqcup\pi_1 I$ & $\mmemo'$ & \hspace{-3mm}$= \bigsqcup\pi_2 I$ \\
   $\Delta\mktab$ &\hspace{-3mm}$= \mktab'\setminus\mktab$ & $\Delta\mmemo$ & \hspace{-3mm}$= \mmemo'\setminus\mmemo$ & & \\
   $F'$ &
   \multicolumn{5}{l}{
     \hspace{-3mm}$= \pi_1 R' \cup \{{\tpl{\mexpr,\menv,\mstore,\makont}} :
     {\mctx \in \dom(\Delta\mktab)\cap\dom(\Delta\mmemo).}$}
   \\ &\multicolumn{5}{l}{\hspace{-3mm}$\phantom{= \pi_1 R' \cup \{} \makont \in \Delta\mktab(\mctx),
       \tpl{\mexpr,\menv,\mstore} \in \Delta\mmemo(\mctx)\}$}
 \end{tabular}

The $\pi_i$ notation is for projecting out pairs, lifted over sets.
This worklist algorithm describes unambiguously what is meant by ``rendezvous.''
After stepping each state in the frontier, the differences to the $\mktab$ and $\mmemo$ tables are collected and then combined in $F'$ as calling contexts' continuations matched with their memoized results.

\iftr{
\begin{theorem}[Correctness]
Same property is the same as in \autoref{thm:global-pushdown}, where $\reify$ ignores the $\mmemo$ component.
\end{theorem}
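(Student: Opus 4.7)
The plan is to extend the argument for \autoref{thm:global-pushdown} with the additional memo invariant $\inv_M$ stated earlier in this section, which asserts that every entry $\tpl{\mexpr_r,\menv_r,\mstore_r} \in \mmemo(\mctx)$ with $\mctx = \tpl{\mexpr_c,\menv_c,\mstore_c}$ witnesses a $\CESKt$ trace from $\tpl{\mexpr_c,\menv_c,\mstore_c,\mkont}$ to $\tpl{\mexpr_r,\menv_r,\mstore_r,\mkont}$ whose common tail is exactly $\mkont$, for any $\mkont \in \unroll{\mktab}{\makont}$ with $\makont \in \mktab(\mctx)$. First I would show $\inv_M$ is preserved by ${\mathcal F}_{\mexpr}$: the pop rule extends $\mmemo$ at $\mctx$ with the result state reached by a subtrace that, by the pushdown invariant from \autoref{sec:pushdown}, began at the matching push and left $\makont$ untouched; the memo-use rule does not modify $\mmemo$ and only widens $\mktab$ monotonically; and the push/pop rendezvous adds exactly those combined states whose witnessing trace is the concatenation of the invariant's balanced memo subtrace with a freshly discovered continuation drawn from $\Delta\mktab$.

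For soundness I would proceed by induction on the unfolding count $n$, case-splitting on the CESK step $\mstate \stepto_\CESK \mstate'$. A plain step is handled by the argument for \autoref{thm:global-pushdown}. A step that enters a call whose context $\mctx$ will ultimately be memoized is handled by observing that at some finite $m$ the abstract execution has already reached both the push (extending $\mktab$) and the pop that records the memo entry; the rendezvous then forces the combined edge from $\mstate$ to the post-call state into $R$, and $\reify$ produces the needed concrete edge. For completeness, case-split on whether the underlying pair in $R$ arose from a non-memoizing $\stepto$ step or from a memo short-circuit. The former reuses the completeness direction of \autoref{thm:global-pushdown}. For the latter, $\inv_M$ supplies a balanced $\CESKt$ trace between the two state shells with a fixed tail, and \autoref{lem:irrelevance} lets us replace that tail with any $\mkont \in \unroll{\mktab}{\makont}$, producing a bounded witness in $\reachrestrict(\mstate_0,\stepto_{\CESK},m)$ for some $m$ large enough to cover the length of the replayed trace.

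The main obstacle is establishing preservation of $\inv_M$ under the rendezvous rule and showing that the induced virtual edges are neither stronger nor weaker than the concrete reality: each rendezvous-generated state must correspond exactly to composing a previously witnessed balanced memo run with a newly discovered continuation, with no reshuffling of stores. This requires carefully aligning the monotonic growth of $\mktab$ and $\mmemo$ in ${\mathcal F}$'s definition of $F'$ with the order in which the concrete machine would explore calls and returns. The remaining pieces are structural induction on $n$ and routine case analysis on $\stepto$, closely mirroring the uninstrumented proof.
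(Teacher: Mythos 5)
Your proposal follows the same route as the paper: the paper's (very terse) argument likewise appeals to the invariant $\inv_M$ on $\mmemo$, preserved alongside the $\mktab$ invariant, and handles the short-circuiting step by reconstructing the balanced path from the memoized result using context irrelevance (\autoref{lem:irrelevance}) and the table invariants. Your write-up simply fleshes out the same plan — including the rendezvous case, which the paper leaves implicit — so there is no substantive divergence.
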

The proof appeals to the invariant on $\mmemo$ whose proof involves an additional argument for the short-circuiting step that reconstructs the path from a memoized result using both context irrelevance and the table invariants.
}

\section{Related Work}
The immediately related work is that of PDCFA \citep{dvanhorn:Earl2010Pushdown, dvanhorn:Earl2012Introspective}, CFA2~\citep{dvanhorn:Vardoulakis2011CFA2, dvanhorn:Vardoulakis2011Pushdown}, and AAM~\citep{dvanhorn:VanHorn2010Abstracting}.
The stack frames in CFA2 that boost precision are an orthogonal feature that fit into our model as an \emph{irrelevant} component along with the stack, which we did not cover in detail due to space constraints.
The version of CFA2 that handles \rackett{call/cc} does not handle composable control, is dependent on a restricted CPS representation, and has untunable precision for first-class continuations.
Our semantics adapts to \rackett{call/cc} by removing the meta-continuation operations, and thus this work supersedes theirs; the machinery is in fact a strict generalization.
The extended version of PDCFA that inspects the stack to do garbage collection~\citep{dvanhorn:Earl2012Introspective} also fits into our model (\autoref{sec:gc}'s $\hat\Gamma$).
We suspect the more ``semantic'' garbage collection from \citet{mc-via-gamma} can be easily adapted to the pushdown setting.

We did additional work to improve the performance of the AAM approach in \citet{dvanhorn:Johnson2013Optimizing} that can almost entirely be imported for the work in this paper.
The technique that does not apply is ``store counting'' for lean representation of the store component of states when there is a global abstract store, an assumption that does not hold for garbage collection.
The implementation\footnote{\url{http://github.com/dvanhorn/oaam}} has pushdown modules that use the ideas in this paper.

\paragraph{Stack inspection}
Stack inspecting flow analyses also exist, but operate on pre-constructed regular control-flow graphs~\citep{ianjohnson:bartoletti2004stack}, so the CFGs cannot be trimmed due to the extra information at construction time, leading to less precision.
Backward analyses for stack inspection also exist, with the same prerequisite~\citep{ianjohnson:DBLP:journals/sigplan/Chang06}.

\paragraph{Pushdown models and memoization}
The idea of relating pushdown automata with memoization is not new.
In 1971, Stephen Cook~\citep{DBLP:conf/ifip/Cook71} devised a transformation to simulate 2-way (on a fixed input) \emph{deterministic} pushdown automata in time linear in the size of the input, that uses the same ``context irrelevance'' idea to skip from state $q$ seen before to a corresponding first state that pops the stack below where $q$ started (called a \emph{terminator} state).
Six years later, Neil D. Jones~\citep{Jones:1977:NLT} simplified the transformation instead to \emph{interpret} a stack machine program to work \emph{on-the-fly} still on a deterministic machine, but with the same idea of using memo tables to remember corresponding terminator states.
Thirty-six years after that, at David Schmidt's Festschrift, Robert Gl\"uck extended the technique to 2-way \emph{non-deterministic} pushdown automata, and showed that the technique can be used to recognize context-free languages in the standard ${\mathcal O}(n^3)$ time~\citep{DBLP:journals/corr/Gluck13}.
Gl\"uck's technique (as yet, correctness unproven) uses the meta-language's stack with a deeply recursive interpretation function to preclude the use of a frontier and something akin to $\mktab$\footnote{See \texttt{gluck.rkt} in online materials for Gl\"uck's style}.
By exploring the state space \emph{depth-first}, the interpreter can find all the different terminators a state can reach one-by-one by destructively updating the memo table with the ``latest'' terminator found.
The trade-offs with this technique are that it does not obviously scale to first-class control, and the search can overflow the stack when interpreting moderate-sized programs.
A minor disadvantage is that it is also not a fair evaluation strategy when allocation is unbounded.
The technique can nevertheless be a viable alternative for languages with simple control-flow mechanisms.
It has close similarities to ``Big-CFA2'' in Vardoulakis' dissertation~\citep{vardoulakis-diss12}.
\paragraph{Analysis of pushdown automata}
Pushdown models have existed in the first-order static analysis literature~\citep[Chapter 7]{local:muchnick:jones:flow-analysis:1981}\citep{dvanhorn:Reps1995Precise}, and the first-order model checking literature \citep{dvanhorn:Bouajjani1997Reachability}, for some time.
These methods already assume a pushdown model as input, and constructing a model from a first-order program is trivial.
In the setting of higher-order functions and first-class control, model construction is an additional problem -- the one we solve here.
Additionally, the algorithms employed in these works expect a complete description of the model up front, rather than work with a modified \texttt{step} function (also called \texttt{post}), such as in ``on-the-fly'' model-checking algorithms for finite state systems~\citep{dvanhorn:Schwoon2005Note}.
\paragraph{Derivation from abstract machines}
The trend of deriving static analyses from abstract machines does not stop at flow analyses.
The model-checking community showed how to check temporal logic queries for collapsible pushdown automata (CPDA), or equivalently, higher-order recursion schemes, by deriving the checking algorithm from the Krivine machine~\citep{dvanhorn:Salvati2011Krivine}.
The expressiveness of CPDAs outweighs that of PDAs, but it is unclear how to adapt higher-order recursion schemes (HORS) to model arbitrary programming language features.
The method is strongly tied to the simply-typed call-by-name lambda calculus and depends on finite sized base-types.
%

\section{Conclusion}
As the programming world continues to embrace behavioral values like functions and continuations, it becomes more important to import the powerful techniques pioneered by the first-order analysis and model-checking communities.
It is our view that systematic approaches to analysis construction are pivotal to scaling them to production programming languages.
We showed how to systematically construct executable concrete semantics that point-wise abstract to pushdown analyses of higher-order languages.
We bypass the automata theoretic approach so that we are not chained to a pushdown automaton to model features such as first-class composable control operators.
The techniques employed for pushdown analysis generalize gracefully to apply to non-pushdown models and give better precision than regular methods.

\acks We thank the anonymous reviewers of DLS 2014 for their detailed
reviews, which helped to improve the presentation and technical
content of the paper.
This material is based on research sponsored by DARPA under the
Automated Program Analysis for Cybersecurity (FA8750-12-2-0106)
project. The U.S. Government is authorized to reproduce and distribute
reprints for Governmental purposes notwithstanding any copyright
notation thereon.

\balance
\bibliographystyle{plainnat}
\bibliography{bibliography}

\end{document}